\title{A Categorical Unification for Multi-Model Data: Part II Categorical Algebra and Calculus \footnote{To be published in the Proceedings of the 2025 Applied Category Theory (ACT) Conference.} }
\author{Jiaheng Lu
\institute{University of Helsinki, Finland}
\email{jiaheng.lu@helsinki.fi}
}
\newtheorem{theorem}{Theorem}
\newtheorem{lemma}[theorem]{Lemma}
\theoremstyle{definition}
\newtheorem{definition}[theorem]{Definition}
\newtheorem{example}[theorem]{Example}
\theoremstyle{remark}
\newcommand{\myeq}{\overset{\mathrm{def}}{=}}
\begin{document}
\maketitle

\begin{abstract}
Multi-model databases are designed to store, manage, and query data in various models, such as relational, hierarchical, and graph data, simultaneously.  In this paper, we provide a theoretical basis for querying categorical databases. We propose two formal query languages: categorical calculus and categorical algebra, by extending relational calculus and relational algebra respectively. We demonstrate the equivalence between these two languages of queries. We propose a series of transformation rules of categorical algebra to facilitate query optimization. Finally, we analyze the expressive power and computation complexity for the proposed query languages.

\end{abstract}

\section{Introduction}

The ``Variety” of data is one of the most important issues in modern data management systems \cite{Lu:2019:MDN:3341324.3323214,journals/pvldb/KiehnSGPWWPR22}.  Across various applications, data sources inherently exhibit diverse organizational structures and formats, e.g. relation, graph, XML, JSON, etc.  In order to provide a unified view and query interface, a categorical model (e.g. \cite{Brown2019CategoricalDI,lu2025categoricalunificationmultimodeldata,DBLP:journals/corr/abs-2201-04905}) has been proposed to define a formal, unified data model capable of effectively accommodating this diverse array of data models. 

This paper propose a comprehensive query framework within the categorical paradigm. We establish a theoretical groundwork for a multi-model query language through the introduction of categorical algebra and categorical calculus. Recall that relational calculus and relational algebra are two theoretical languages used for expressing queries on relational databases. They provide a way to formulate queries that can retrieve and manipulate data stored in relational database systems. Analogously, this paper develops categorical calculus and  algebra for categorical databases.  While categorical calculus is a declarative language that describes the properties of desired objects and morphisms in categories, categorical algebra is a procedural language that provides operations to manipulate categories and retrieve specific objects and morphisms from them.

The contributions and organization of the paper are as follows:

(i) We introduce categorical calculus for formulating queries on categorical databases. This calculus extends the relational domain calculus by incorporating multi-model predicates. These predicates enable the formulation of queries such as reachability queries in graph data, and twig pattern matching for parent-child and ancestor-descendant relationships in XML data (Section \ref{sec:calculus}).

(ii) We propose categorical algebra that includes both set operations and category operations. Set operations involve manipulating individual or multiple sets, while category operations enable the conversion between sets, functions and categories. Categorical algebra can address a wide range of queries, including relational queries, XML twig pattern matching, graph pattern matching, and reachability queries on graphs. We establish a theorem demonstrating the equivalence between categorical calculus and categorical algebra. Specifically, each categorical calculus can be expressed using a set of operators of categorical algebra, and vice versa (Section \ref{sec:algebra}).

(iii) We define a series of transformation rules of algebra expressions, which rewrite categorical algebraic expressions into more efficient forms for query optimization. We analyze the expressive power and computation complexity for the proposed query languages (Section \ref{sec:rules}).


\noindent \textbf{Related work.}  This work is related to two categories of previous research: (1) applied category theory for query languages in databases, and (2) query calculus and algebra for various types of data.

\smallskip


Although category theory was initially developed as an abstract mathematical language to unify algebra and topology, researchers have explored its applications in various fields, including database query languages. Tannen (1994) \cite{conf/pods/Tannen94} highlights that concepts from category theory can lead to a series of useful languages for collection types such as sets, bags, lists, complex objects, nested relations, arrays, and certain kinds of trees.  Libkin (1995) \cite{DBLP:conf/icdt/Libkin95} uses universality properties to study the syntax of query languages with approximations. Libkin and Wong (1997) \cite{journals/jcss/LibkinW97} showcase the connection between database operations on collections and the categorical notion of a monad. More recently, Schultz and Spivak (2016) \cite{schultz2016algebraic} introduce a categorical query language that serves as a data integration scripting language. Gibbons et al. (2018) \cite{journals/pacmpl/GibbonsHHW18} present an elegant mathematical foundation for database query language design using category theory concepts such as monads and adjunctions. These works demonstrate the intriguing connections between category theory and query languages and query processing, revealing how category theory can inform and enhance database query language design.

\smallskip


Since Codd (1972)  \cite{DBLP:persons/Codd72} demonstrated that the same class of database queries can be expressed by relational algebra and the safe formulas of relational calculus, query algebra and calculus have become central topics in database research. A plethora of related works exist on query algebra and calculus across various types of data, including 
relational data \cite{DBLP:persons/Codd72,10.1145/114325.103712,10.1145/99935.99943,10.1145/32204.32219}, nested relational data \cite{DBLP:conf/pods/ParedaensG88,DBLP:conf/pods/Gucht87}, graph data \cite{DBLP:conf/pods/FrancisGGLMMMPR23,10.1093/comjnl/bxaa031}, object-oriented data \cite{10.1145/588011.588026}, incomplete and probabilistic data \cite{suciu:OASIcs.Tannen.10,10.1145/1265530.1265535}, etc. These previous papers lay the foundation for database systems to process various types of data. Our paper makes contributions by proposing query calculus and algebra for categorical data with the application on multi-model databases.


\section{Preliminaries}

\begin{figure}\centering\includegraphics[width=0.85\textwidth]{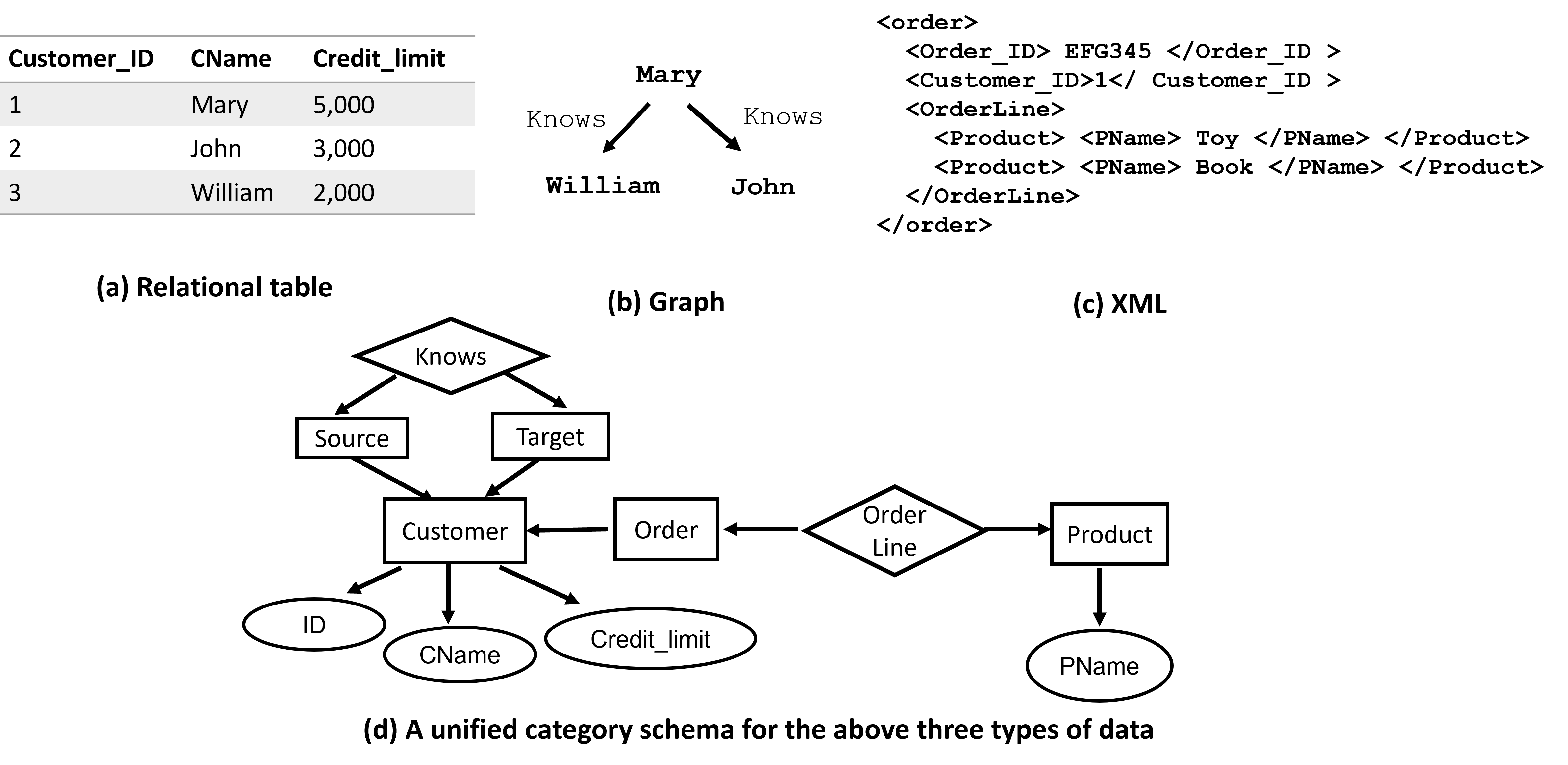}
\caption{This toy example shows three types of data including relation, XML and graph. They have a unified categorical representation.} \label{fig:firstexample}
\end{figure}


A database can be viewed as a category, specifically a set category. In this context, each object in the category represents a set, while each morphism corresponds to a function between two sets.  The composition of morphisms aligns with the composition of functions. In the rest of this paper, we will use the terms ``\textit{set}” and ``\textit{object}” interchangeably, as well as ``\textit{function}” and ``\textit{morphism}” interchangeably. There are three fundamental types of objects within this category: \textit{entity objects}, \textit{attribute objects}, and \textit{relationship objects}. For example, Figure \ref{fig:firstexample} illustrates a unified category representation of three types of data (relation, XML, and graph) from E-commerce, which contains customers, social network, and order information with three distinct data models. Customer information is stored in a relational table—their IDs, names, and credit limits. Graph data bear information about mutual relationships between the customers, i.e., who knows whom. In XML documents, each order has an ID and a sequence of ordered items, each of which includes product name. Those three types of data can be integrated in one unified category schema in Figure \ref{fig:firstexample} (d), where ``\texttt{Knows}'' is a relationship object, ``\texttt{Customer}'', ``\texttt{Order}'', and ``\texttt{Product}'' are entity objects and ``\texttt{ID}'', ``\texttt{CName}'', ``\texttt{PName}'' are attribute objects. Furthermore, the category under discussion is not just any set category, but rather a thin category,  defined as follows:




\begin{definition}(\textbf{Thin Category or Posetal Category}) \cite{roman2017introduction} Given a pair of objects $X$ and $Y$ in a category $\mathcal{C}$, and any two morphisms $f$, $g$: $X \to Y$, we say that $\mathcal{C}$ is a thin category (or posetal category) if and only if the morphisms $f$ and $g$ are equal. 
\label{def:thin}\end{definition}

In category theory, a commutative diagram is a graphical representation that depicts the relationships between objects and morphisms.  A diagram is commutative meaning that different paths through the diagram yield the same result. All diagrams in a thin set category are commutative \cite{roman2017introduction}.





\section{Categorical Calculus}
\label{sec:calculus}

A query on categories is a mapping from one category $\mathcal{C}$ into another category $\mathcal{D}$. In the following two sections, we will define two formal query languages: (1) categorical calculus, a declarative language for describing results in $\mathcal{D}$; and (2)  categorical algebra, a procedural language for listing operations on $\mathcal{C}$.







The alphabets for categorical calculus are listed in Table \ref{tab:notations}. Given a category $\mathcal{C}$, let $S_i$ denote an object in $\mathcal{C}$, and $f:S_i \to S_j$ be a function between $S_i$ and $S_j$. Given any element $x \in S_i$, there are two ways to express the image $y$ of $x$ under $f$: (i) $y = f(x)$ and (ii)  $y = x \cdot S_j $. The second expression is unambiguous because there is only one function between any two objects in a thin category (Refer to Def  \ref{def:thin}). Furthermore, to present the composition of functions, consider two functions  $f:S_i \to S_j$ and $g:S_j \to S_k$. Given any element $x \in S_i$,  the image $y$ of $x$ under the composition of $f$ and $g$ can be written as $y= (g \circ f)(x)$ or $y= x \cdot S_j \cdot S_k$. For example, consider Figure \ref{fig:firstexample}. Assume $x$ is an element of \texttt{OrderLine}, then the product name $y$ of $x$ is denoted as $y = x \cdot$ \texttt{Product} $\cdot$ \texttt{Name}.

 We then define three types of terms: range terms, function terms, and predicate terms.

\textbf{Range terms}: A range term $x \in S$ specifies that the element $x$ belongs to the set $S$.



\textbf{Function terms}: Given two  variables $x_i \in S$ and $x_j \in S'$, a function term $(f: x_i \to x_j) = g_n \circ \cdots  g_2 \circ g_1 $ shows that $f$ is a function from $S$ to $S'$ and $f$ is the composition of a sequence of functions $g_1$ , $g_2$, ..., and $g_n$.




\textbf{Predicate terms}: Let \( x_i \) and \( x_j \) be variables, \( \alpha \) an individual constant, and \( \theta \) a predicate symbol. The expressions \( x_i \theta x_j \) and \( x_i \theta \alpha \) are referred to as predicate terms. Predicates can be classified into three types:

1. \textbf{Classic Predicates (\( \theta_M \))}: These include standard mathematical comparison symbols such as \( =, \neq, <, >, \le, \ge \), which are used to compare numeric or character data.

2. \textbf{Tree Data Predicates (\( \theta_T \))}: These predicates are specifically designed for operations on tree data structures. Consider a category describing tree data $\mathcal{T}$, where each tree node is assigned a Dewey code \cite{conf/sigmod/TatarinovVBSSZ02,DBLP:conf/vldb/LuLCC05} to indicate its position within $\mathcal{T}$. In Dewey codes, each node is represented by a vector: (i) the root is labeled with an empty string $\epsilon$; (ii) for a non-root node $u$, label($u$) = label($s$).$x$, where $u$ is the $x$-th child of $s$. Dewey codes facilitate efficient evaluation of structural relationships between elements. For example, if a node $u$ is labeled "1.2.3.4", then its parent is labeled "1.2.3" and its grandparent is labeled "1.2".

The predicate $\theta_T$ can include \texttt{isParent}, \texttt{isChild}, \texttt{isAncestor}, \texttt{isDescendant}, \texttt{isFollowing}, \texttt{isSibling}, \texttt{isFollowing-sibling}, \texttt{isPreceding}, and \texttt{isPreceding-sibling}. All these XPath axes \cite{10.1145/2463664.2463675} can be determined by comparing the two Dewey codes.


3. \textbf{Graph Data Predicates (\( \theta_G \))}: These predicates are tailored for operations on graph data structures. Consider a category representing a graph  $\mathcal{G}$, where all edges are stored in a set $E$. In this context, if $a$ and $b$ are two node variables in $\mathcal{G}$, then the notation $a \rightsquigarrow^E b$ indicates that node $b$ is reachable from node $a$ via edges in $E$. Additionally, $a \rightsquigarrow_n^E b$ shows that node $b$ can be reached from node $a$ within $n$ hops.





\begin{table}
\caption{The Alphabets of the Categorical Calculus}
\label{tab:notations}
\begin{center}
\begin{tabular}{ |c|c| } 
 \hline
Notation & Examples  \\ [0.5ex] 
  \hline  \hline 
   Object Constants & $S_1,S_2 \cdots$   \\ 
 \hline 
Morphism Constants & $f_1 : S_1 \to S_2 = g_1 \circ g_2 \cdots$  \\  
  \hline
  Element Variables in Objects & $x_1 \in S_1, x_2 \in S_2 \cup S_3 \cdots$   \\ 
 \hline 
 Logical Symbols & $\exists,\forall,\wedge,\vee,\neg$   \\ 
 \hline 
\end{tabular}
\end{center}
\end{table}

\smallskip

 A (well-formed) formula of the categorical calculus is defined recursively as follows:

\begin{enumerate}[noitemsep]
  \item Any term (including range term, function term and predicate term) is a formula;
  \item If $\alpha$ is a formula, so is $\neg \alpha$;
  \item If $\alpha_1, \alpha_2$ are formulas, so are ($\alpha_1 \vee \alpha_2$) and ($\alpha_1 \wedge \alpha_2$);
  \item If $\alpha$ is a formula in which $x$ occurs as a free variable, then $\exists x (\alpha)$ and $\forall x (\alpha)$  are formulas.
\end{enumerate}


Table \ref{tab:examples} provides several examples of formulae. The standard definitions of \textit{free} and \textit{bound} occurrences of variables are utilized here.  A free variable in a formula is one that is not quantified by any quantifier within the formula; it is not bound by logical operators such as $\forall$ or $\exists$. In contrast, a bound variable is one that is quantified by a quantifier within the formula.

An expression in categorical calculus typically takes the form:
\[ Z = \{ X, \mathcal{R}  \mid W \} \]

where (1)  $X = {x_1, \ldots, x_m}$ are $m$ distinct variables representing entity or attribute objects;

(2) $\mathcal{R} = {R_1, \ldots, R_n}$, where each $R_i = (x_{i_1}, \ldots, x_{i_j})$ represents a relational object with $j$ components. Either $X$ or $\mathcal{R}$ may be empty, but not both;

(3) $W = U_1 \wedge U_2 \wedge ... \wedge U_p \wedge F_1 \wedge F_2 \wedge ... \wedge F_{k}  \wedge V $, where 

(3.1) $U_1$ through $U_p$ are range terms over variables in $X$ and $\mathcal{R}$;

(3.2) $F_1$ through $F_{k}$ are function terms over objects in $X$ and $\mathcal{R}$; 

(3.3)  $V$ is either null, or it is a formula with the properties: (3.3.1) Every bound variable in $V$ has a clearly defined range, ensuring they are \textit{safe} expressions, which will be explained below.
(3.3.2) Every free variable in $V$ belongs to the variable in $X$ and $\mathcal{R}$.

\begin{table}
\caption{The formulae of the Categorical Calculus}
\label{tab:examples}
\begin{center}
\begin{tabular}{ |c|c| } 
 \hline

 Formulae with range terms  & Safe variables   \\ 
 \hline  \hline
 $ x_1 \in O_1$ & $x_1$   \\ 
 \hline 
 $ x_1 \in O_1 \wedge \neg (x_1 \in O_2) $ & $x_1$   \\ 
 \hline  \hline
Formulae with function and range terms  & Safe variables   \\ 
 \hline   \hline
 $ ((f_1: x_1 \to x_2) = f_2 \circ g_1)  \wedge  (x_1 \in S_1) \wedge (x_2 \in S_2)$ & $x_1,x_2$ \\
 \hline   
 $ (\pi_1: (x_1,x_2) \to x_1 )  \wedge  (x_1 \in S_1) \wedge (x_2 \in S_2)$ & $x_1,x_2$   
 \\ 
 \hline \hline
 Formulae with predicate, range and function terms & Data model   \\ 
 \hline  \hline
 $    (x_1 \in S_1) \wedge  (x_2 \in S_2) \wedge (x_1 \rightsquigarrow^E x_2)$ $\wedge (x_1 \cdot$ Name = "John") & Graph   \\ 

 \hline
 $   (x_1 \in D_1) \wedge  (x_2 \in D_2) \wedge (x_1 ~ isAncestor ~ x_2)$ &  Tree  \\ 
 \hline \hline
 Formulae with unsafe terms &  Unsafe variable  \\ 
  \hline  \hline 

 $x_2 \in S_2, x_3 \in S_3, \exists x_1 ( x_1 > x_3 \wedge x_2 = 6)$  & $x_1$   \\
 \hline

 $\forall x_1 \exists x_2 \in S_2 ( x_1 > x_2$)  & $x_1$   \\ 
 \hline  
 $   (x_1 \in S_1) \vee f(x_1)=a_1$ & $x_1$  \\ 
 \hline 
\end{tabular}

\end{center}

\end{table}

Whenever we use universal quantifiers, existential quantifiers, or negative of predicates in a categorical calculus expression, we must make sure that the resulting expression make sense. A safe expression is one that is guaranteed to yield a finite number of elements in any object as its result; otherwise the expression is called unsafe. That is, in a safe expression, both bound and free variables must have clearly defined their ranges.

 \begin{example} \label{exp:calculus} Consider the categorical schema depicted in Figure \ref{fig:SC2}(a), which stores information about students, their addresses, their gender, and the courses they attend (in the \texttt{SC} object). Let's examine a query scenario: \textit{finding all male students and their addresses who attend every course attended by at least one female student}. Figure \ref{fig:SC2}(b) and (c) illustrate the structure of the query and the returned category  respectively. The categorical calculus is formulated as follows:

\begin{gather*}
\left\{ (x_1, x_2) \mid x_1 \in \text{Student}, x_2 \in \text{Address}, 
f(x_1) = x_2 = f_2, x_1\cdot\text{Gender} = \text{'Male'}, \right. \\
\left. \forall y_1 \in \text{Student}, y_1\cdot\text{Gender} = \text{'Female'} \to 
\exists y_2, y_3 \in \text{SC}, \exists y_4 \in \text{Course}, \right. \\
\left.
(y_2 \cdot \text{Student} = x_1) \wedge (y_3 \cdot \text{Student} = y_1) \wedge 
(y_2 \cdot \text{Course} = y_4) \wedge (y_3 \cdot \text{Course} = y_4)
\right\}
\end{gather*}




 \begin{figure}\centering\includegraphics[width=0.9\textwidth]{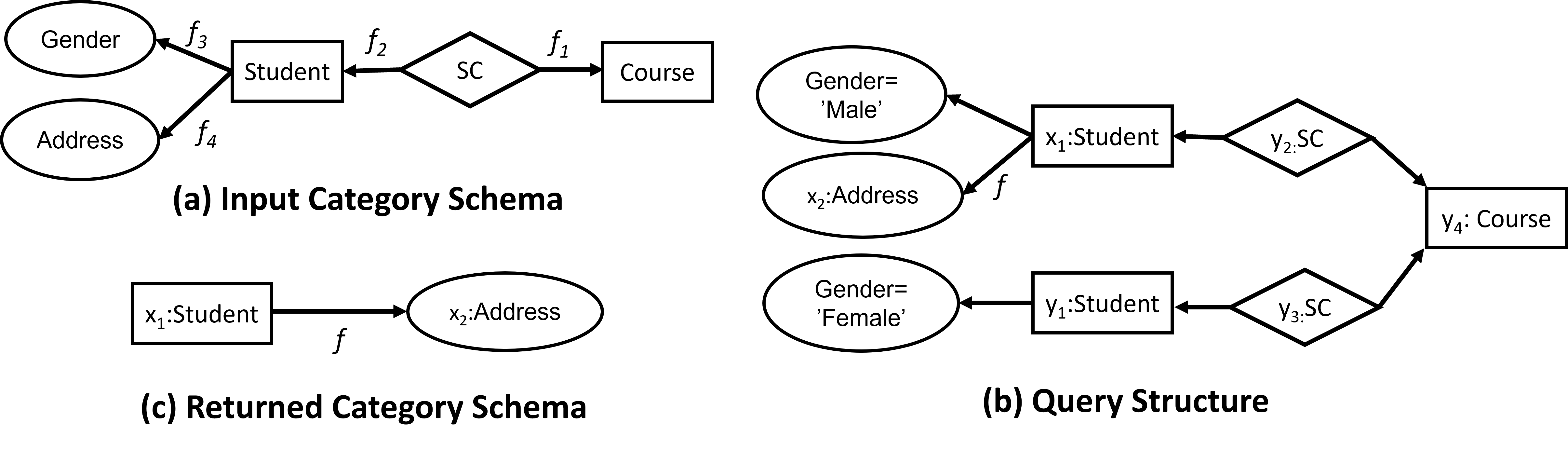}\caption{An illustrative example for student course category }\label{fig:SC2}\end{figure}













 \end{example}
 






 \begin{figure}\centering\includegraphics[width=0.7\textwidth]{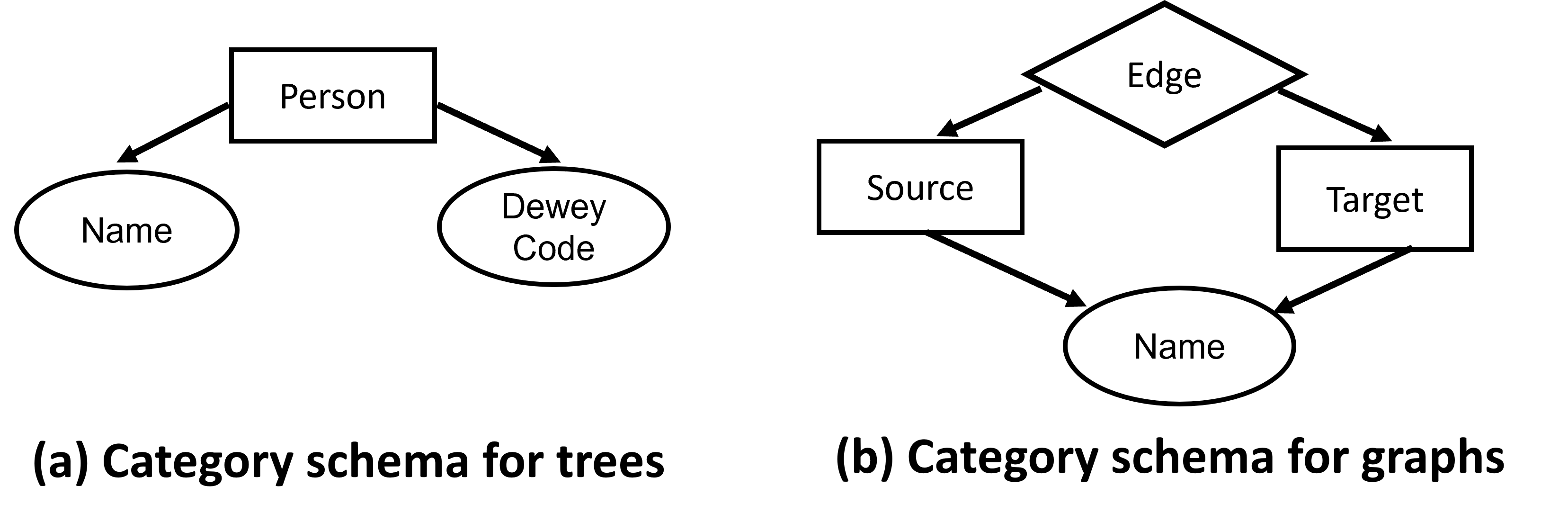}\caption{Categorical schemata for trees and graphs}\label{fig:friendexample2}\end{figure}

\begin{example} Imagine a family tree structure. Figure \ref{fig:friendexample2}(a) depicts a category schema of individuals, where Dewey codes are used to encode their positions in the tree. Consider a specific query: \textit{finding the names of all ancestors of ``John”}. The categorical calculus for this query is formulated as follows:


\{  $ x | x \in$ Name$\wedge \exists y_1$$\in$Person, $\exists y_2$$\in$Person, ($y_1 \cdot$DeweyCode $isAncestor$ $y_2\cdot$DeweyCode) $\wedge$  ($y_1 \cdot$Name = $x$)  $\wedge$ ($y_2 \cdot$Name=``\texttt{John}”)\}

\end{example}

 \begin{example}  Figure \ref{fig:friendexample2}(b)  illustrates a category schema for graph-structured data. The ``\texttt{Edge}” object represents a relationship object that projects onto the source and target of each edge. This example is used to demonstrate the reachable predicate. Consider the following query: \textit{find the names of all persons reachable by ``John”.}

\{ $x$ $|$ $x \in$ Name $\wedge$ $\exists y_1$$\in$ Source, $\exists$$y_2$$\in$Target, ($y_1  \rightsquigarrow^{Edge} y_2$) $\wedge$  ($y_2 \cdot$Name=$x$)  $\wedge$   ($y_1 \cdot$Name=``\texttt{John}”)\}

\end{example}

\section{Categorical Algebra}
\label{sec:algebra}



Categorical algebra is a formal system for manipulating and querying objects and morphisms within categories. It provides a set of operations that take one or more objects and morphisms as input and produce a new object or category as output. These operations are divided into two classes: \textit{set operations} and \textit{category operations}. Set operations perform actions on a single set or multiple sets (objects), while category operations involve transforming objects and morphisms into a category (categorization) and transforming a category into an object (with limit). These operations enable the precise and structured querying and manipulation of data within categories.


\subsection{Set Operators}

A set operator can be a unary, binary, or ternary operator.

\subsubsection{Unary Set Operators}  Three operators: Map, Project, and Select are introduced as follows.


\noindent 1. Map ($f$ or $\cdot$)

The map operator transforms one set into another using a function. As discussed in Section 3, there are two ways to present functions (morphisms) in this paper. Given an object $S_0$,  and functions $f_i: S_{i-1} \to S_i$, $1 \leq i \leq n$, the map operator can be written in either of the following forms:

\[f_n(...(f_1(S_0))) \myeq \{ y  |  x \in S_0 \wedge y = f_n(...(f_1(x))) \}, ~or\]
\vspace{-1mm}
\[ S_0 \cdot S_1... \cdot S_n \myeq \{ y  |  x \in S_0 \wedge y = x \cdot S_1... \cdot S_n \}\]

\noindent 2. Project ($\pi$)

The projection operator in categorical algebra  resembles the one in relational algebra applied to relational objects. Let  $A=(A_1,A_2,...,A_k)$ be a list of  projected components in a relationship object $R$.  Given any $r \in R$, for $i = 1,2,...,n (n \geq k)$ the notation $r[A_i]$ denote the $i$-$th$ component $A_i$ of $r$.

\[r[A] ~ \myeq  ~(r[A_1],r[A_2],...,r[A_k])\]

The projection of $R$ on $A$ is defined by

\[\pi_A R ~ \myeq ~ \{ r | r[A] \in R \}\]



\noindent 3.  Select ($\sigma$) 

The select operator is a unary operator which filters elements based on the given condition.  Selection expression has the form of $\sigma_{A \theta_M B} (S)$, where $A$ and $B$ represent two function expressions, and  $\theta_M$ is a mathematical predicate including $=, \neq, <, >, \le, \ge$. \[\sigma_{A \theta_M B}(S)  \myeq  \{ x  |  x \in S \wedge (A(x)  \theta_M B(x)  \}\], where $A(x)$  is the evaluation of a function expression $A$ over $x \in S$ such that \[
    A(x)  = 
\begin{cases}
    f_n (\cdots f_1(x)), & \text{if} ~ A  = f_n \circ \cdots \circ f_1 \\ 
    x \cdot S_1 \cdot ... \cdot S_n.   & \text{if} ~ A  = S_1 \cdot ... \cdot S_n
\end{cases}
\]
and $B(x)$ is defined in a similar manner. 





\begin{example} Recall the category schema depicted in Figure \ref{fig:SC2}(a). Consider the following query: \textit{select the courses taken by at least one female student and one male student}.

$S_1 = \sigma_{student.gender='Female'}(SC)$ (Select operator)

$S_2 = \sigma_{student.gender='Male'}(SC)$ (Select operator)

$S_3 = S_1 \cap S_2 $ (Binary operator)

$S_4 = S_3 \cdot Course$ (Map operator)

Combine them together: $ (\sigma_{student.gender='Female'}(SC) \cap \sigma_{student.gender='Male'}(SC)) \cdot Course $

 \end{example}

\subsubsection{Binary Set Operators} \label{sec:divisiontree}

The binary set operators union ($\cup$), intersection ($\cap$), difference ($-$) and Cartesian product ($\times$) are defined in the usual way between sets.  In the following, we introduce two additional binary set operators about division and tree operators.



1. Division ($\div$)

The division operator in categorical algebra is similar to that in relational algebra \cite{DBLP:persons/Codd72}. In particular, consider dividing a relationship object $R$ of degree $m$ by an object $S$ of degree $n$ with respect to the components of $A$ on $R$ and $B$ on $S$ respectively, denoted by $R[A] \div S[B]$. The degree of the division object is $m-n$. \[ R[A] \div S[B] \myeq \{ (x_1,\ldots,x_n) | (x_1, \ldots, x_n)  \in \pi_{\overline{A}} R  \wedge \forall (y_1, \ldots, y_m) \in  \pi_{B} S´,  (x_1, \ldots, x_n, y_1, \ldots, y_m) \in R  \},\] where $\overline{A}$ and $\overline{B}$ are the component list of $R$ and $S$ that is complementary to $A$ and $B$ respectively. If $n=1$, then $B$ can be omitted in the division expression. 

The division operator can be implemented with other operators:

$ R[A] \div S[B] =  \pi_{\overline{A}} R -\pi_{\overline{A}} ( (\pi_{\overline{A}} R \times \pi_B S) - R  )  $

 \begin{example}  Refer again to Figure \ref{fig:SC2}(a). Consider the query: \textit{find the addresses of all male students who attend all courses attended by any female student}.

$S_1 = \sigma_{student \cdot gender='Female'}(SC)$

$S_2= \sigma_{student \cdot gender='Male'}(SC)$ 

$S_3 = S_2[course] \div (S_1 \cdot course) $

$S_4 = S_3 \cdot address$

Combine them together:

(($\sigma_{student \cdot gender='Male'}SC)[course]$ $\div$ ($\sigma_{student \cdot gender='Female'}SC) \cdot course$) $\cdot$ address 
 \end{example}

2. Binary operators with tree data

Consider a category $\mathcal{T}$ that exhibits a tree structure. Given two sets of Dewey codes $D_1$ and $D_2$ in \( \mathcal{T} \), the function \( getParent(D_1, D_2) \) identifies all pairs of elements that have parent-child relationships.

\vspace{-3mm}
\[getParent(D_1,D_2) \myeq \{ (x,y) | x \in D_1 \wedge  y\in D_2  \wedge (x ~ is ~ a ~ \text{prefix} ~ of ~ y)  \wedge \text{level}(x)=\text{level}(y)-1\}\]

Similarly, we can define the operator $getAncestor$ to find the pairs with ancestor-descendant relationships. \[getAncestor(D_1,D_2) \myeq \{ (x,y) | x \in D_1 \wedge  y\in D_2  \wedge (x ~   \text{is a prefix of}  ~ y)  \}\]



 


 

Additional tree operators, such as $getSibling$, $getPreceding$ and $getFollowing$, among others,  can be found in Appendix \ref{sec:otheralgebra}.

\subsubsection{Ternary operators with graph data}  


 Consider a category $\mathcal{G}$ with graph structure described in Figure \ref{fig:friendexample2}(b). Given two node sets $Source$ and $Target$, and one edge set $Edge$ in $\mathcal{G}$. The operator $getReach$($S,T,E$) returns pairs of elements from $S$ and $T$ which are reachable via the edges in $E$.  
\[ getReach(S,T,E) \myeq \{ (x_1, x_2) | x_1 \in S  \wedge x_2 \in T  \wedge x_1 \rightsquigarrow^E x_2 \} \]






Similarly, we can define the $n$-$Hop$ operator as follows: \[ getnHop(S,T,E) \myeq \{ (x_1, x_2) | x_1 \in S  \wedge x_2 \in T  \wedge x_1 \rightsquigarrow^E_n x_2 \} \]


\begin{example} Consider a query to find all recursive friends of ``\texttt{John}” in Figure \ref{fig:friendexample2} (b).
 
$S_1 = \sigma_{Name=``John"} Source$

$S_2 = getReach(S_1,Target,Edge)$
 
$S_3 = (\pi_{Target} S_2) \cdot Name$

Combined together:

$(\pi_{Target} (getReach(\sigma_{Name=``John"} Source,Target,Edge))) \cdot Name$

 \end{example}


The computation of $getReach$ is similar to calculating the \textbf{transitive closure}, a well-studied research problem (e.g. \cite{schnorr1978algorithm,ioannidis1988efficient,agrawal1990direct}). The key difference is that while the transitive closure computes all reachable pairs for every node, $getReach$ specifically returns the pairs between the sets \( S \) and \( T \). The algorithms used to compute $getReach$ are detailed in Appendix \ref{sec:getreachalgorithm}.




\subsection{Category operators}

Category operators perform the transformation between sets, functions and categories.

\subsubsection{Sets and Functions to Category}

Given a set of sets $S_1$, $\cdots$, $S_n$ and functions defined on those sets, $f_1:S_{i_1} \to S_{j_1}$, $\cdots$ , $f_m:S_{i_m} \to S_{j_m}$, we have  \[Cat(S_1,...,S_n,f_1:S_{i_1} \to S_{j_1},..., f_m:S_{i_m} \to S_{j_m})\]


This operator, called \textbf{Categorification}, constructs a category using a given set of objects and morphisms. Note that this construction is constrained to be applied only to  well-defined functions  $f: S_i \to S_j$, such that for each element $ x \in S_i $, there must exist exactly one corresponding image $f(x)$ in the set $S_j$.

\subsubsection{Category to Set} \label{sec:sublimit}

We introduce a new operator called \textbf{Limit} which converts a category into a relational object (set).  

\[Lim(Cat(S_1,...,S_n,f_1:S_{i_1} \to S_{j_1},..., f_m:S_{i_m} \to S_{j_m})) \] 
\[\myeq  \{(x_1, \cdots , x_n)|  x_i \in S_i, 1 \leq i \leq n \wedge   f_k: x_{i_k} \to x_{j_k}, f(x_{i_k})=x_{j_k}, 1 \leq k \leq m \}\]

The limit operator, defined on a set of objects \( \mathcal{S} = \{x_1, \cdots, x_n\}\) and morphisms \( \mathcal{F} = \{ f_1, \cdots, f_m\} \), generates a relationship object containing elements from each object in \( \mathcal{S} \) such that those elements satisfy the respective function mappings in \( \mathcal{F} \). In category theory, limits play a crucial role in analyzing and defining structures within categories.  The purpose of the limit operator resembles that of the join operator in relational databases: it joins multiple objects (relations) into a single object (relation) to satisfy the corresponding functional mappings (join conditions).

We show the equivalence between categorical algebra and categorical calculus as follows:


\begin{theorem}
The categorical calculus  and categorical algebra are equivalent.    
\label{the:equivalent}\end{theorem}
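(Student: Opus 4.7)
The plan is to follow Codd's classical strategy for relational equivalence and establish the theorem by two separate inductive translations: a mapping $\tau_A$ from categorical algebra expressions to categorical calculus formulae, and a mapping $\tau_C$ from safe calculus expressions to algebra expressions, then show that both preserve semantics. Because the category in question is thin, each pair of objects admits at most one morphism, so function terms $(f: x_i \to x_j) = g_n \circ \cdots \circ g_1$ are semantically unambiguous, which allows both translations to treat composed morphisms as first-class citizens without further bookkeeping. The base cases for translation are just the atomic terms in Table \ref{tab:notations}, and the inductive step deals with the logical connectives and the algebra operators in turn.

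For the algebra-to-calculus direction, I would induct on the structure of an algebra expression $E$. The three unary set operators $f$, $\pi$, $\sigma$, the binary operators $\cup$, $\cap$, $-$, $\times$, $\div$, the tree operators $getParent$, $getAncestor$ (and the ones in Appendix \ref{sec:otheralgebra}), and the graph operators $getReach$, $getnHop$ are essentially already defined in the paper by calculus-style set-builder expressions, so the translation is immediate at each step. The category operator $Cat$ simply packages its arguments as a schema and does not itself describe a set to be queried, so for composite expressions of the form $Lim(Cat(\ldots))$ I would translate the combined operator into a single calculus expression whose range terms list each $x_i \in S_i$ and whose function terms record each $f_k: x_{i_k} \to x_{j_k}$; this is exactly the form of the right-hand side of the defining equation for $Lim$.

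For the calculus-to-algebra direction, the proof is more delicate. Given a safe expression $Z = \{X, \mathcal{R} \mid W\}$, I would first put $W$ into prenex normal form, then eliminate universal quantifiers via $\forall x\,\varphi \equiv \neg \exists x\,\neg \varphi$, so that only existential quantifiers remain. The inductive translation then proceeds on the quantifier-free matrix: range and function terms produce the corresponding objects and morphism-indexed sets; conjunctions become Cartesian product followed by selection; disjunctions become unions (after padding to a common schema); negation is handled by set difference against the active domain of the formula; existential quantifiers become projections that discard the bound column; and predicate terms $\theta_T$ and $\theta_G$ are implemented by the tree and graph operators $getParent$, $getAncestor$, $getReach$, $getnHop$, etc. The characteristic case where a universal appears positively, e.g.\ in Example \ref{exp:calculus}, is precisely the pattern implemented by the division operator $\div$, matching the calculus example to the algebra example given for that same schema.

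The main obstacle will be the safety bookkeeping in the calculus-to-algebra direction. Negated subformulae and universally quantified subformulae do not by themselves determine a finite range, so one must carry through an active-domain construction that, for each free or bound variable, is computed as a union of the ranges extracted from the positive range terms and function terms of the enclosing conjuncts. I would formalise this by defining, simultaneously with $\tau_C$, an auxiliary operation $\mathrm{dom}(x, W)$ returning an algebra expression for the active domain of $x$ in $W$, and proving by induction that (i) $\tau_C(W)$ is well-defined whenever $W$ is safe, and (ii) its semantics coincides with $\{X, \mathcal{R} \mid W\}$. Together with the algebra-to-calculus direction, this yields the two-way expressibility asserted in Theorem \ref{the:equivalent}.
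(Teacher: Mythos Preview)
Your plan is sound and, in the algebra-to-calculus direction, coincides with the paper's argument almost line for line. In the calculus-to-algebra direction, however, you take a genuinely different route. The paper does not eliminate $\forall$ via $\neg\exists\neg$ nor does it introduce an active-domain construction; instead it converts the quantifier-free matrix into \emph{disjunctive} normal form, builds one category $\mathcal{C}_k$ per conjunctive clause (range terms become the objects, function terms become the morphisms, tree/graph predicates are materialised as relationship objects via $getParent$, $getReach$, etc.), computes $Lim(\mathcal{C}_k)$ to realise all the joins simultaneously, applies $\sigma$ for the residual $\theta_M$ predicates, and then discharges every universal quantifier directly by the division operator, with a dedicated supporting Lemma~\ref{lem:division} showing that $\pi_{S_1}\bigl(L[S_2,\ldots,S_{n+1}] \div (S_2\times\cdots\times S_{n+1})\bigr)$ realises a block of $\forall$'s over a limit object $L$. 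Your scheme instead recurses on the logical structure of the matrix, treating conjunction as Cartesian product plus selection, $\exists$ as projection, and $\neg$ as complement relative to the auxiliary $\mathrm{dom}(x,W)$ you define. Both strategies are correct: the paper's buys a clean treatment of universals (no complement, no active-domain bookkeeping) at the price of the DNF blow-up, while yours is closer to the textbook relational proof and makes safety explicit.

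Two small points on your write-up. First, you invoke both the $\neg\exists\neg$ elimination \emph{and} $\div$ for the positive-universal pattern of Example~\ref{exp:calculus}; pick one as the primary mechanism, since as stated the roles overlap. Second, your ``Cartesian product followed by selection'' for a conjunction of function terms \emph{is} exactly the $Lim$ operator in this thin-category setting, so you may as well phrase the translation in terms of $Lim$ to keep the algebra's categorical vocabulary front and centre.
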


\smallskip

Proof: The proof can be accomplished using the following two steps:

(i) Algebra queries $\sqsubseteq$ calculus queries; and

(ii) Calculus queries $\sqsubseteq$ algebra queries.

For (i) it is sufficient to show that each of the algebraic operations can be simulated by a calculus expression. The simulation of algebra operations by calculus expression is as follows:

1. Map $f(S_1)$, $f: S_1 \to S_2$ : \[ \{y | y \in S_2 \wedge \exists x \in S_1, y = f(x) \}; \]

2.  Select  $\sigma_{A \theta_M B}(S)$: \[ \{x |x \in S \wedge A(x) \theta_M B(x) \} \]

3. Project $\pi_A R$, where $A= A_1,...,A_k$ is a sublist of components of a relationship object $R$.  \[ \pi_A R = \{ (x_1,...,x_k) |  x_i \in A_i (1 \leq i \leq k) \wedge \exists (x_1,...,x_k,x_{k+1},..., x_n) \in  R \}\]

4. $S_1[A] \div S_2[B]$: \[ \{ (x_1,\ldots,x_n) |  (x_1, \ldots, x_n) \in \pi_{\overline{A}} S_1  \wedge (\forall (y_1, \ldots, y_m) \in  \pi_{B} S_2´,  (x_1, \ldots, x_n, y_1, \ldots, y_m) \in S_1)  \}\]

5. $getParent$($D_1$, $D_2$), where  $D_1$ and $D_2$ are two Dewey code sets.  \[ \{(x_1,x_2) | x_1 \in D_1 \wedge x_2 \in D_2 \wedge x_1 ~ isParent ~ x_2\} \]


6. $getReach(S,T,E)$, where $S$ and $T$ denote the source and target node set, and $E$ the edge set. \[ \{(x_1,x_2) | x_1 \in ~ S \wedge x_2 \in ~ T \wedge x_1  \rightsquigarrow^E x_2  \}; \]

7. $getnHop$($S,T,E$):  \[ \{(x_1,x_2) | x_1 \in ~ S \wedge x_2 \in ~ T \wedge x_1 \rightsquigarrow_n^E x_2  \};
\]





8. $Cat(S_1,\cdots,S_n, f_1: S_{i_1} \to S_{j_1},\cdots,f_m: S_{i_m} \to S_{j_m})$: \[ \{x_1,\ldots,x_n | x_i \in S_i (1 \leq i \leq n) \wedge f_k:S_{i_k} \to S_{j_k}(1 \leq k \leq m) \} \]

9. $Lim(Cat(S_1,\cdots,S_n, f_1:S_{i_1} \to S_{j_1},\cdots, f_m: S_{i_m} \to S_{j_m}))$: \[ \{(x_1, \cdots , x_n) |  x_i \in S_i (1 \leq i \leq n) \wedge f_{k}(x_{i_k}) = x_{j_k} (1 \leq k \leq m)  \} \]


\smallskip

Then we proceed to prove that (ii) calculus queries $\sqsubseteq$ algebra queries.


We will exhibit an algorithm for translating a calculus expression into a semantically equivalent algebraic expression. The reduction algorithm can be outlined as follows: First, the categorical calculus expression is converted into prenex normal form where the quantifier-free part (matrix) is disjunctive normal form. Second, for each conjunctive clause, a category is constructed to include all variables and their associated morphisms. Third, the limit of each category is computed, and elements that do not satisfy the commutative diagram of the limit are naturally removed. Fourth, the remaining elements in the limit are further refined through selection and division. Finally, the target objects and morphisms are projected to construct the desired result. A  detailed account follows.

Step 1: 

(1.1) Convert $V$ to the prenex normal form without expanding the range-coupled qualifiers;

(1.2) Convert the matrix (the part of the formula that remains after all quantifiers) in $V$ into disjunctive normal form;

(1.3)  Apply a change to variables resulting from (1.2) so that variables become $x_1$,...,$x_{n}$ in the order of their first occurrence in the qualification. 


Let a calculus expression resulting from the above transformations be 

\vspace{-4mm}
\[ Z' = \{  X', \mathcal{R'} | U' \wedge F'  \wedge V' \} \]


Step 2: Generate the following three types of objects and morphisms;

(2.1) Generate entity or attribute objects for each variable in $X'$ and $\mathcal{R'}$, For each variable $x_i$, form a defining equation of the range term of  $x_i$. The algebraic expression on the right-hand side of this equation is obtained by applying the following rewriting rules to $x_i$:

(i) $ x_i \in O \Rightarrow S_i = O$
(ii) $\vee \Rightarrow \cup$ (set union)
(iii) $\wedge \Rightarrow \cap$ (set intersection)
(iv) $\wedge \neg \Rightarrow -$

For example, if $x_i \in O_1 \wedge  \neg x_i \in O_2$, then the defining equation for $S_i$, 
\vspace{-1mm}
\[ S_i = O_1-O_2\]

In this case, $O_1$ and $O_2$ must be union-compatible. 


(2.2) Generate relationship objects for each set $R_i$ of variables in $\mathcal{R'}$. Specifically, for each \( R_i = (x_{i_1}, \ldots, x_{i_m}) \), create objects \( R_i \) and define the associated projected morphisms \( \pi_j: R_i \to S_{i_j} \) for each \( j \) (\( 1 \leq j \leq m \)) where $x_{i_j} \in S_{i_j}$.  

(2.3) Generate relationship objects for all tree and graph predicates in $V'$, including reachable ($\rightsquigarrow^E$), nHop ($\rightsquigarrow_n^E$), $isParent$, and other tree operators.

(i)  $x_1 \in S, x_2 \in T, x_1 \rightsquigarrow^E x_2$ $\Rightarrow$  Create $R_i = getReach(S,T,E)$, $\pi_1: R_i \to S$ and $\pi_2: R_i \to T$;

(ii) $x_1 \in S, x_2 \in T, x_1 \rightsquigarrow^E_n x_2$ $\Rightarrow$  Create $R_i = getnHop(S,T,E)$, $\pi_1: R_i \to S$ and $\pi_2: R_i \to T$;

(iii) $x_1 \in D_1, x_2 \in D_2, isParent(D_1,D_2)$ $\Rightarrow$ Create $R_i = getParent(D_1,D_2)$, $\pi_1: R_i \to D_1$, $\pi_2: R_i \to D_2$.

Step 3: 

(3.1) For each clause $C_k$ in the disjunctive normal form of $V'$, generate morphisms such that ($f(x_i)=x_j$ or $x_i \cdot S_j = x_j$) $\Rightarrow f: S_i \to S_j$, where $x_i$ and $x_j$ appear in the clause $C_k$.

(3.2) Construct one category $ \mathcal{C}_k$ for each clause, involving the associated objects and morphisms from Steps (2) and (3.1), $ \mathcal{C}_k= Cat(S_1,\cdots,S_n, f_1:S_{i_1} \to S_{j_1},\cdots, f_m: S_{i_m} \to S_{j_m})$;

(3.3) Compute a universal limit object $L_k = Lim(Cat(\mathcal{C}_k)$, where   $\mathcal{C}_k$ is the category in Step (3.2).

Step 4: For each clause $C_k$ in the disjunctive normal form of $V'$, continue to apply the select operations for each limit object $L_k$ generated from step (3).

(i) $A(x_i) \theta_M  B(x_j) \Rightarrow \sigma_{A(x_i) \theta_M  B(x_j) }(L_k)$
(ii) $A(x_i) \theta_M  \alpha \Rightarrow \sigma_{A(x_i) \theta_M  \alpha }(L_k)$

where $a$ is an individual constant, $\theta_M$ is one of $=,\neq,>,<,\leq,\geq$. 



Step 5: Compute the division on the limit object obtained in the previous step to incorporate \( S_j \) ($ 1 \leq j \leq n $) corresponding to the universal quantifier \( \forall \) of \( V' \). The division operation is defined as \( L[S_1,\cdots,S_n] \div (S_1 \times \cdots \times S_n)[S_1,\cdots,S_n] \), where \( L \) represents the limit object from Step (4), and \( S_1 \times \cdots \times S_n \) denotes the Cartesian product of sets. After performing division for each clause, combine the results from individual clauses using a union operation.

Step 6:  Form projected objects that take into account the target objects in $X'$ and $\mathcal{R'}$. Finally, construct the desired target category.



The validity of the reduction algorithms relies on the following lemmas. 


\begin{lemma}\label{lem:R} Consider a category $\mathcal{C}$ with a set of objects $S_1, S_2, \cdots, S_n$ and  functions $f_1:S_1 \to S_2, \cdots, f_n:S_n \to S_{n+1}$. Let a relational object $R=Lim(\mathcal{C})$. We have $\forall x_i \in \pi_{S_i} (R)$, $f(x_i) \in S_{i+1}$, where $1 \leq i \leq n$.
\end{lemma}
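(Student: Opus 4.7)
The plan is to unwind the definition of the $Lim$ operator given in Section 4.2.2 and then exploit the well-definedness constraint imposed on morphisms by $Cat$. I read the statement of the lemma as asserting that, for each index $i$ with $1\le i\le n$, every element $x_i$ appearing in the $S_i$-projection of $R$ has $f_i(x_i)\in S_{i+1}$ (treating $f$ in the statement as the relevant $f_i$).

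First, I would recall that, by the definition of the limit operator applied to the category built from $S_1,\ldots,S_{n+1}$ and $f_1,\ldots,f_n$, we have
\[
R \;=\; Lim(\mathcal{C}) \;=\; \{(x_1,\ldots,x_{n+1}) : x_j\in S_j \text{ for all } j,\ \text{and } f_k(x_k)=x_{k+1} \text{ for all } 1\le k\le n\}.
\]
The projection $\pi_{S_i}(R)$ is then the set of $i$-th coordinates of tuples in $R$, i.e.\ $\pi_{S_i}(R)=\{x_i : \exists (x_1,\ldots,x_{n+1})\in R\}$.

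Next, I would pick an arbitrary $x_i\in\pi_{S_i}(R)$ and, by definition of the projection, choose a witnessing tuple $(x_1,\ldots,x_{n+1})\in R$ whose $i$-th coordinate is $x_i$. Since this tuple lies in the limit, the constraint $f_i(x_i)=x_{i+1}$ holds, and the component condition gives $x_{i+1}\in S_{i+1}$; combining these two facts yields $f_i(x_i)\in S_{i+1}$, as required. The well-definedness condition demanded by $Cat$ (each $f_i$ is total on $S_i$ and lands in $S_{i+1}$) ensures that the equation $f_i(x_i)=x_{i+1}$ is meaningful and that no element of $\pi_{S_i}(R)$ can fail to have an image.

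I do not anticipate a serious obstacle here: the argument is essentially a direct reading of the definitions of $Lim$ and $\pi$. The only potentially subtle point is making explicit that $x_i$ being in the projection produces, by choice, a full tuple rather than only a partial witness; once this is stated, the morphism constraint built into $Lim$ immediately delivers the conclusion. If one wanted to be more thorough, I would also note that in a thin category any two parallel morphisms coincide, so the image $f_i(x_i)$ is unambiguous even if the composition structure of $\mathcal{C}$ provides more than one syntactic path from $S_i$ to $S_{i+1}$.
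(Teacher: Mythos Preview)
Your proposal is correct and follows essentially the same approach as the paper: both unfold the definition of $Lim$ to describe $R$ explicitly as tuples satisfying the chain of equations $f_k(x_k)=x_{k+1}$, and then read off that any $x_i$ in the $S_i$-projection has $f_i(x_i)=x_{i+1}\in S_{i+1}$. The only minor difference is that the paper additionally asserts $\pi_{S_i}(R)=S_i$ outright, whereas you argue more carefully from an arbitrary element of the projection via a witnessing tuple; your version is actually the cleaner of the two.
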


\begin{proof}
Based on the definition of limit operator in Section \ref{sec:sublimit}: 
\[R =  \{(x_1, x_2, \cdots, x_{n+1} )|  x_1 \in S_1, \cdots, x_{n+1} \in S_{n+1}   \wedge   f(x_{1})=x_{2}, \cdots, f(x_{n})=x_{n+1} \}\]

Therefore,  $\pi_{S_i} (R) = S_i$ and $\forall x_i \in S_i $, $f(x_i) =x_{n+1} \in S_{n+1}$, where $1 \leq i \leq n$.

\end{proof}


\begin{lemma}\label{lem:division} This lemma addresses the validity of the computation of division for multiple  $\forall$ quantifiers. Consider the following calculus expression: \begin{equation}
\begin{split}
\{ x_1 | x_1 \in S_1  \wedge \forall y_1 \in S_2,  \cdots, \forall y_n \in S_{n+1}, \exists z_1 \in R_1, \cdots, \exists z_n \in R_n, f_1(z_1)=x_1 \wedge  g_1(z_1)=y_1,  \\
              \cdots \wedge f_n(z_n)=x_n \wedge g_n(z_n)=y_n \}
\end{split}
\end{equation}

This expression is equivalent to the following algebraic operators: \[\pi_{S_1}(L[S_2,S_3, \cdots, S_{n+1}] \div (S_2 \times S_3 \times \cdots S_{n+1} )[S_2,S_3 \cdots, S_{n+1}])\]  where $L=Lim(Cat(S_1,\cdots,S_{n+1},R_1, \cdots, R_n,f_1:S_1 \to S_2, \cdots, f_n:S_1 \to S_{n+1}))$.

\end{lemma}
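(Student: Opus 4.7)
The plan is to unfold both sides of the claimed equality to their set-builder definitions and verify that they describe the same subset of $S_1$. First, I would invoke the definition of $Lim$ from Section \ref{sec:sublimit} to write
\[ L = \{(x_1, y_1, \ldots, y_n, z_1, \ldots, z_n) \mid x_1 \in S_1,\; y_i \in S_{i+1},\; z_i \in R_i,\; f_i(z_i) = x_1,\; g_i(z_i) = y_i,\; 1 \leq i \leq n\}. \]
This makes explicit that a tuple lies in $L$ exactly when each $z_i$ witnesses a pair $(x_1, y_i)$ under the two projections $f_i, g_i$. In particular, projecting $L$ onto its $S$-coordinates absorbs the existential content of the $z_i$: the image consists of tuples $(x_1, y_1, \ldots, y_n)$ such that for each $i$ there exists $z_i \in R_i$ with $f_i(z_i) = x_1$ and $g_i(z_i) = y_i$.

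Next I would apply the division definition from Section \ref{sec:divisiontree} to $L[S_2, \ldots, S_{n+1}] \div (S_2 \times \cdots \times S_{n+1})[S_2, \ldots, S_{n+1}]$. By that definition, the divisor ranges over every combination in $\prod_{i=2}^{n+1} S_i$, and the quotient retains a tuple over the complementary components precisely when, for every such combination, the combined tuple lies in $L$. Applying the outer $\pi_{S_1}$ then yields
\[ \{x_1 \in S_1 \mid \forall (y_1, \ldots, y_n) \in S_2 \times \cdots \times S_{n+1},\; \forall i,\; \exists z_i \in R_i,\; f_i(z_i) = x_1 \wedge g_i(z_i) = y_i\}. \]
The inner conjunction $\forall i\,\exists z_i$ is logically identical to the block $\exists z_1 \in R_1, \ldots, \exists z_n \in R_n$ in the calculus expression of the lemma, because each $z_i$ is constrained only by $x_1$ and $y_i$, so the existential quantifiers commute freely. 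Matching set-builder forms then gives the desired equivalence.

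The main obstacle I anticipate is the notational subtlety in $L[S_2, \ldots, S_{n+1}]$: one must be precise about which components of $L$ are highlighted for division and which are implicitly projected away, since $L$ carries both the $S$-coordinates and the $R_i$-coordinates. A second delicate point is justifying that the block $\forall i\,\exists z_i$ may be identified with $\exists z_1 \cdots \exists z_n$ inside the scope of $\forall y_1 \cdots \forall y_n$; I would lean on Lemma \ref{lem:R} to certify that application of the projection morphisms to the witnesses remains well defined after projecting $L$. Once these bookkeeping issues are settled, the equality of the two set-builder descriptions is immediate.
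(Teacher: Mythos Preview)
Your proposal is correct and follows essentially the same route as the paper: unfold the definitions of $Lim$ and division to set-builder form, then match against the calculus expression, invoking Lemma~\ref{lem:R} to justify that the projected components range over the full $S_i$. The paper's version is terser---it writes the intermediate set with the $\exists z_i$ block outside the $\forall (y_1,\ldots,y_n)$ (which is literally what division followed by $\pi_{S_1}$ yields, since the $R_i$-coordinates sit in $\overline{A}$) and then appeals to Lemma~\ref{lem:R} for the rewriting---whereas you move the existentials inside first and discuss the commutation explicitly; but the underlying argument is the same.
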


\begin{proof}
Based on the definition of division operator in Section \ref{sec:divisiontree},

$\pi_{S_1}(L[S_2,S_3, \cdots, S_{n+1} ] \div (S_2 \times S_3 \times \cdots S_{n+1} )[S_2,S_3, \cdots, S_{n+1}])$ returns \[ \{ x_1 | x_1 \in S_1 \wedge  \exists z_1 \in R_1, \cdots, \exists z_n \in R_n, \forall (y_1, y_2, \cdots, y_{n}) \in  \pi_{S_2,S_3, \cdots, S_{n+1}}(R),  (x_1, y_1, \cdots, y_{n}, z_1, \cdots, z_n ) \in R  \}\]

Note that $L$ is a limit object computed through the category $\mathcal{C}$=$Cat$($S_1,\cdots,S_{n+1},R_1, \cdots, R_n,f_1:S_1 \to S_2, \cdots, f_n:S_1 \to S_{n+1}$). Thus, based on Lemma \ref{lem:R},  the above expression can be rewritten as 
$\{ x_1 | x_1 \in S_1  \wedge \forall y_1 \in S_2, \cdots, \forall y_n \in S_{n+1}, \exists z_1 \in R_1, \cdots, \exists z_n \in R_n,  f_1(z_1)=x_1 \wedge  g_1(z_1)=y_1,$ $\cdots \wedge f_n(z_n)=x_n \wedge g_n(z_n)=y_n \}$, as desired.

\end{proof}

\begin{lemma}\label{lem:graphoperator} This lemma addresses the validity of the reachability operator for graph data. Consider the following calculus expression: $\{ x_1, x_2 | x_1 \in S_1  \wedge x_2 \in S_2  \wedge x_1 \rightsquigarrow^E x_2 \wedge \exists x_3 \in S_3, f_1(x_3)=x_1 \wedge f_2(x_3)=x_2\}$. This calculus can be implemented with the following three algebraic operators: $S_4$= $getReach(S_1,S_2,E)$; $S_5$ =  $Lim(Cat(S_1,S_2,S_3,S_4,f_1: S_3 \to S_1,f_2: S_3 \to S_2,\pi_1: S_4 \to S_1,\pi_2: S_4 \to S_2))$ and $S_6$ = $\pi_{S_1,S_2}S_5$. 
\end{lemma}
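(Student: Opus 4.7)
The plan is to prove equality of the two sets by unfolding the definitions of each algebraic operator and matching them term by term against the clauses of the calculus expression. First I would substitute the definition of $getReach$ from Section \ref{sec:divisiontree} so that $S_4 = \{(a,b) \mid a \in S_1 \wedge b \in S_2 \wedge a \rightsquigarrow^E b\}$, which also supplies well-defined projection morphisms $\pi_1, \pi_2$ making the $Cat(\dots)$ well-formed (a prerequisite for the Categorification operator). Next I would apply the definition of $Lim$ from Section \ref{sec:sublimit} to expand $S_5$ into the explicit set of tuples $(x_1, x_2, x_3, x_4)$ with $x_i$ in its corresponding object and satisfying all four functional constraints $f_1(x_3) = x_1$, $f_2(x_3) = x_2$, $\pi_1(x_4) = x_1$, $\pi_2(x_4) = x_2$. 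Finally, the projection $\pi_{S_1, S_2}$ reduces $S_5$ to pairs $(x_1, x_2)$ for which such witnesses $x_3, x_4$ exist.

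Having obtained this explicit description of $S_6$, I would establish the two inclusions. For $\text{calculus} \subseteq \text{algebra}$: given $(x_1, x_2)$ satisfying the calculus formula, the clause $x_1 \rightsquigarrow^E x_2$ together with $x_1 \in S_1, x_2 \in S_2$ yields, by the definition of $getReach$, an element $x_4 = (x_1, x_2) \in S_4$ with $\pi_1(x_4) = x_1$ and $\pi_2(x_4) = x_2$; the calculus clause already provides the witness $x_3 \in S_3$ with $f_1(x_3) = x_1, f_2(x_3) = x_2$. Hence $(x_1, x_2) \in S_6$. For the reverse direction: given $(x_1, x_2) \in S_6$, the witness $x_4 \in S_4$ combined with the defining predicate of $getReach$ immediately yields $x_1 \rightsquigarrow^E x_2$, while the witness $x_3 \in S_3$ satisfies the existential clause $\exists x_3 \in S_3, f_1(x_3) = x_1 \wedge f_2(x_3) = x_2$.

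The main obstacle, such as it is, is the verification that the $Cat$ construction in the definition of $S_5$ is legitimate — recall that Categorification is constrained to well-defined functions. Here the potentially delicate morphisms are $\pi_1, \pi_2 : S_4 \to S_1, S_2$; but since $getReach$ produces $S_4$ as a set of ordered pairs, the projections are manifestly functional, so no further check is required. A minor bookkeeping point is to confirm that the thin-category assumption does not impose extra commutativity constraints that would further restrict $S_5$: the two parallel paths from $S_4$ and from $S_3$ into $S_1$ (respectively $S_2$) are equated by the limit definition precisely via $\pi_1(x_4) = x_1 = f_1(x_3)$ (respectively for $S_2$), which is exactly what the limit condition records, so no hidden constraint is lost. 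With these observations the two unfolded set descriptions coincide, completing the lemma.
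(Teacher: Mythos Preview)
Your proposal is correct and follows essentially the same route as the paper: unfold the definition of $getReach$ to describe $S_4$, expand $S_5$ via the $Lim$ definition into explicit 4-tuples subject to the four functional constraints, then project and match against the calculus formula. The paper's own proof does exactly this sequence of substitutions (without separating the two inclusions or checking the well-definedness of $Cat$); your version is simply a more careful write-up of the same unfolding argument.
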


\begin{proof}
The $getReach$ operator returns the pairs of elements from $S$ and $T$ which are reachable via the edges in $E$. 

$S_4$= $\{ x_1, x_2 | x_1 \in S  \wedge x_2 \in T  \wedge x_1 \rightsquigarrow^E x_2 \}$,

Based on the definition of limit operator in Section \ref{sec:sublimit}: 
\[S_5 =  \{(x_1, x_2,x_3, x_4 )|  x_1 \in S_1, x_2 \in S_2, x_3 \in S_3, x_4 \in S_4,   f_1(x_{3})=x_{1} \wedge f_2(x_{3})=x_{2} \wedge \pi_{S_1}(x_{4})=x_{1} \wedge \pi_{S_2}(x_{4})=x_{2} \}\]

Apply the definition of the reachability predicate: \[S_5 =  \{(x_1, x_2,x_3, x_4 )|  x_1 \in S_1, x_2 \in S_2, x_3 \in S_3, x_4 \in S_4,   f_1(x_{3})=x_{1} \wedge f_2(x_{3})=x_{2} \wedge (x_1 \rightsquigarrow^E x_2) \}\]

Finally, perform the projection operator on $x_1$ and $x_2$: \[S_6 =  \{(x_1, x_2 )|  x_1 \in S_1, x_2 \in S_2,  \exists x_3 \in S_3, f_1(x_{3})=x_{1} \wedge f_2(x_{3})=x_{2} \wedge (x_1 \rightsquigarrow^E x_2) \}\]

as desired.

\end{proof}

 \begin{example} Recall the calculus expression in Example \ref{exp:calculus}. We use this example to illustrate the translation algorithm above.



Step 1: Convert it into the prenex normal form and convert the matrix into disjunctive normal form.

\{ $x_1$, $x_2$ $|$ $x_1$ $\in student \wedge x_2 \in address \wedge  $ ($f: x_1$ $\to$ $x_2  = f_2$) $\wedge$  $ \forall y_1 \in student,\exists y_2 \in SC, \exists y_3 \in SC, \exists y_4 \in course,
( (x_1 \cdot gender=\text{``Male”}  \wedge y_1 \cdot gender \neq \text{``Female”}) \vee   ( (x_1 \cdot gender=\text{``Male”})  \wedge (y_2 \cdot student= x_1) \wedge (y_3 \cdot student= y_1) \wedge  (y_2 \cdot course = y_4) \wedge (y_3 \cdot course = y_4) ) )  $  \}

Step 2: Generate the objects for each variable $x_1$, $x_2$, $y_1$ to $y_4$:

$S_1 = student$ for $x_1$; $S_2 = address$ for $x_2$; $S_3 = student$ for $y_1$; $S_4 = SC$ for $y_2$; $S_5 = SC$ for $y_3$; and $S_6 = course$ for $y_4$.

Step 3: Construct a category for each clause in disjunctive normal form, involving the relevant objects and morphisms, and then compute the corresponding limit:

The first limit: \( S_7 = \text{Lim}(S_1, S_3) \), which represents the Cartesian product of \( S_1 \) and \( S_3 \).

The second limit: \( S_8 = \text{Lim}(S_1, S_3, S_4, S_5, S_6, f_1: S_4 \to S_1, f_2: S_5 \to S_3, f_3: S_4 \to S_6, f_4: S_5 \to S_6 ) \).

Step 4: Apply the select condition for the limit objects:

$S_9=  \sigma_{S_1.gender='Male'}(S_7) \cap \sigma_{S_3.gender \neq 'Female'}(S_7)$

$S_{10}=  \sigma_{S_1.gender='Male'}(S_8)$

Step 5:   Compute the division for each clause with respect to the variable \(\forall y_1\) (i.e. $S_3$), then union the results.

$S_{11} =   S_9[student] \div S_3 $, which returns empty; 

$S_{12} =   S_{10}[student] \div S_3 $;

$S_{13} = S_{11} \cup S_{12}$ = $S_{12}$.

Step 6: Return the final result 

$S_{14} = \pi_{S_1}S_{13}$; $S_{15} = \pi_{S_2}S_{13}$; 
 $S_{16} = Cat(S_{14}, S_{15}, f_3 : S_{14} \to S_{15})$. Return $S_{16}$.

 \end{example}

\smallskip

\section{Algebraic Transformation Rules }
\label{sec:rules}





Query optimization in relational databases utilizes various algebraic transformation rules. These rules are applied to transform the original query into an equivalent query that can be executed more efficiently. Similarly, in the context of multi-model databases, the rules of categorical algebraic transformation can be used for multi-model query optimization.  We present a family of rewriting rules that can be used by multi-model query optimizers with algebraic transformation.




(1)  \textbf{Cascade of $f$}: A sequence of functions can be concatenated as a composition of individual function operations.

\[ {f_n}   ( \ldots ({f_1}(S))   \equiv {(f_n \circ \ldots \circ f_1)}(S)    \]



 (2) \textbf{Lim and $\pi$ }: The project operator ($\pi$) functions as the inverse of the $limit$ object operator. It retrieves specific components from a relationship object, contrasting with the limit operator which synthesizes elements from multiple objects.

  \[ \pi_{S_1}(Lim(Cat(S_1,S_2,f:S_1 \to S_2))) \equiv S_1\]

  \[ \pi_{S_2}(Lim(Cat(S_1,S_2,f:S_1 \to S_2))) \subseteq S_2\]

In particular,  if there is no morphism between $S_1$ and $S_2$, then $Lim(S_1,S_2)$=$S_1 \times S_2$. In this case, both equations hold.

  \[ \pi_{S_1}(Lim(S_1,S_2)) \equiv S_1 ~ \text{and} ~ \pi_{S_2}(Lim(S_1,S_2)) \equiv S_2\]

 (3) \textbf{Pushing $\sigma$ to one or multiple objects in $lim$}:

  \[ \sigma_C(Lim(S_1,S_2)) \equiv Lim(\sigma_C(S_1),S_2)\]

  Alternatively, if the selection condition $C$ can be written as $C_1 \wedge C_2$, where condition $C_1$ involves only the values of $S_1$ and condition $C_2$ involves only the values of $S_2$, the operation commutes as follows:

\[ \sigma_C(Lim(S_1,S_2)) \equiv Lim(\sigma_{C_1}(S_1),\sigma_{C_2}(S_2)\]

(4)    \textbf{Pushing $\sigma$ to one or multiple objects in }$getReach$:

 Assume that $C_1$ is a filtering selection on the set $S$, and $C_2$ is on $T$, then the following \textit{pushdown} operator is valid: \[ getReach(\sigma_{C_1}(S),\sigma_{C_2}(T),E) \equiv \sigma_{C_1 \wedge C_2}(getReach(S,T,E)) \]

  (5) \textbf{Pushing $\sigma$ to one or two objects in $getParent$ and other tree-structure operators.} \[ \sigma_C(getParent(D_1,D_2)) \equiv  getParent(\sigma_{C_1}(D_1), \sigma_{C_2}(D_2)) \]



(6) \textbf{Commuting function mapping with the product operator}.





Consider a category \(\mathcal{C}\). If \(f: A \to B\) and \(g: C \to D\) are two morphisms in \(\mathcal{C}\), then their  product \(f \otimes g: A \times C \to B \times D\) is also a morphism in \(\mathcal{C}\). Given two sets \( S_1 \) and \( S_2 \), we have \[ (f \otimes g)(S_1 \times S_2) \equiv f(S_1) \times g(S_2) \]


(7) \textbf{Commuting $\pi$ with the $Lim$ operation}. Given two relationship objects $R_1$ and $R_2$ and a function $f$: $R_1 \to R_2$, and a projection List $L$=$\{A_1,...,A_n,B_1,...,B_m\}$, where $L_1=\{A_1,...,A_n\}$ are components of $R_1$ and  $L_2=\{B_1,...,B_m\}$ are components of $R_2$. If there exists another function $f_2: R'_1 \to R'_2 $ where $R_1'=\pi_{A_1,...,A_n}(R_1)$ and $R_2'=\pi_{B_1,...,B_m}(R_2)$.
 such that the following diagram commutes:



\[
\begin{tikzcd}
R_1 \arrow[r, "f_1"] \arrow[d, "\pi_{L_1}"'] & R_2 \arrow[d, "\pi_{L_2}"] \\
R_1' \arrow[r, "f_2"'] & R_2'
\end{tikzcd}
\]

 then the two operators $\pi$ and $Lim$ can be commuted as follows:\[ \pi_L(Lim(Cat(R_1, R_2, f_1:R_1 \to R_2))) \equiv Lim(Cat(\pi_{L_1}(R_1), \pi_{L_2}(R_2), f_2: \pi_{L_1}(R_1) \to \pi_{L_2}(R_2)) ) \]


(8) \textbf{Commuting $g$ with the $Lim$ operation}.  Consider a category \(\mathcal{C}\) with two objects \(S_1\) and \(S_2\), and a morphism \(f_1: S_1 \to S_2\). Let \(g\) be a function with the domain \( \operatorname{Lim}(\mathcal{C}) \), which can be decomposed into two functions \(g = g_1 \otimes g_2\), where the domains of \(g_1\) and \(g_2\) are \(S_1\) and \(S_2\), respectively. If there exists another morphism \(f_2: g_1(S_1) \to g_2(S_2)\) such that the following diagram commutes:


\[
\begin{tikzcd}
S_1 \arrow[r, "f_1"] \arrow[d, "g_1"'] & S_2 \arrow[d, "g_2"] \\
S_1' \arrow[r, "f_2"'] & S_2'
\end{tikzcd}
\]

then the two operators $g$ and $limit$ can be commuted as follows:

\[ g(Lim(Cat(S_1, S_2, f_1:S_1 \to S_2))) \equiv Lim(Cat(g_1(S_1), g_2(S_2), f_2: g_1(S_1) \to g_2(S_2)) ) \]

(9) \textbf{Commuting $Lim$ with the $getReach$ operation}.  Consider the scenario where the contents of a single object are filtered using both the $limit$ and $getReach$ operators. The following rule demonstrates that the order of applying these filters can be swapped. \[ \pi_{S_1}(getReach(\pi_{S_1}\sigma_C(lim(S_1,S_2)) \equiv   \pi_{S_1}\sigma_C(Lim(\pi_{S_1} getReach({S_1},T,E),S_2)) \]

\begin{theorem} \label{the:express}
    
Categorical  calculus and categorical algebra can express all of the following:
    \begin{itemize}
        \item Relational calculus and algebra queries;
        \item Graph pattern matching and graph reachability queries;
        \item XML twig pattern queries. 
    \end{itemize}
    
\end{theorem}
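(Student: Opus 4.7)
The plan is to prove each bullet of Theorem \ref{the:express} by explicit simulation, leveraging Theorem \ref{the:equivalent} so that for each query class I may freely choose whichever of the two languages (calculus or algebra) gives the most direct encoding. The overall strategy is: exhibit a uniform translation from each source query language into categorical algebra, then appeal to Theorem \ref{the:equivalent} to transfer the result to categorical calculus.

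For the relational part, I would argue that every relational instance is already a thin set category whose objects are the relations together with their attribute sets, with the projection morphisms $\pi_j: R \to A_j$ as the structural morphisms. Selection, projection, union, intersection, difference and Cartesian product appear verbatim in Section \ref{sec:algebra} as set operators, and the categorical limit operator supplies the join: given relations $R$ and $S$ sharing an attribute $A$, the limit of the diagram $R \to A \leftarrow S$ yields exactly the equi-join $R \bowtie_A S$. Since Codd's classical completeness result tells us that these operators suffice for relational calculus, this bullet follows immediately.

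For graph queries, I would use the ternary operators $getReach$ and $getnHop$ for reachability, and simulate pattern matching by repeated application of $Lim$ on the categorified schema of nodes and edge relations. Specifically, a graph pattern with $k$ nodes and $\ell$ edges is encoded by constructing a subcategory with $k$ node variables and $\ell$ edge relationship objects (with source/target projections), then computing its limit; selection operators handle label or attribute constraints on the matched nodes and edges. The reachability variants fall out directly from Lemma \ref{lem:graphoperator}. For XML twig patterns, which are trees whose edges are labelled either parent-child (//) or ancestor-descendant (/), the plan is to use the tree operators $getParent$, $getAncestor$ (and their siblings from Appendix \ref{sec:otheralgebra}) as the building blocks. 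A twig with $k$ pattern nodes becomes a categorified diagram in which each pattern edge is replaced by the appropriate tree relationship object; the root-to-leaf traversal conditions of the twig are realised by composing these along projections and again taking a limit, with value predicates encoded via $\sigma$.

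The main obstacle I anticipate is not any single translation but rather making precise the semantic correspondence between the external source languages (SQL-style relational algebra, conjunctive graph patterns in the style of \cite{DBLP:conf/pods/FrancisGGLMMMPR23}, and twig patterns in the sense of \cite{10.1145/2463664.2463675}) and the constructed categorical algebra expressions, because each source language has its own conventions for variable binding, output schemas, and treatment of duplicates. My plan is to fix a canonical normal form for each source language (conjunctive queries with inequalities for relational; node-labelled pattern categories for graphs and twigs), give the translation on that normal form, and then verify correctness by a structural induction whose base cases are exactly the simulations already carried out in Lemmas \ref{lem:R}--\ref{lem:graphoperator} and in the proof of Theorem \ref{the:equivalent}. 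The remaining work, once the normal forms are fixed, is routine bookkeeping.
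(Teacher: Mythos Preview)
Your proposal is correct and follows essentially the same approach as the paper's own (sketch) proof: both argue by direct simulation, observing that the relational primitives Select, Project, Cartesian product (as a degenerate limit), Union, and Difference occur verbatim among the categorical set operators, that reachability and graph-pattern matching reduce to $getReach$/$getnHop$ together with $Lim$ and $\sigma$, and that twig queries are handled via the Dewey-code tree predicates. Your plan is more carefully structured than the paper's sketch (you add normal forms and a structural induction, and invoke Theorem~\ref{the:equivalent} explicitly to transfer between the two languages), but the underlying argument is the same; one small slip is that you have the XPath conventions reversed---$/$ is parent-child and $//$ is ancestor-descendant.
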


\noindent \textbf{Proof}: (Sketch) Categorical algebra contains the main operations found in relational algebra, making it capable of expressing any relational query. The fundamental operations in relational algebra—\textit{Select}, \textit{Project}, \textit{Cartesian Product}, \textit{Union}, and \textit{Set Difference}—each have corresponding counterparts in categorical algebra. Note that Cartesian product is a specific instance of the \textit{limit} operator, in the case of the absence of morphisms between objects. Further, graph reachability can be addressed using reachable ($\rightsquigarrow^E$ and $\rightsquigarrow_n^E$) predicates, while graph pattern matching can be translated into \textit{Map}, \textit{Select}, and \textit{Limit} operators to find all matching nodes. Moreover, parent-child, ancestor-descendant, sibling and other tree relationships can also be implemented using the corresponding categorical predicates with the assistance of Dewey codes of XML elements.


\smallskip



\begin{theorem} \label{the:complexity}
Consider a category $\mathcal{C}$ with $p$ objects and $q$ morphisms. Let the maximum number of elements in any object of $\mathcal{C}$ be $n$.

\begin{enumerate}
    \item The data time complexity of computations for categorical calculus and categorical algebra in $\mathcal{C}$ is bounded by $O(q \cdot n^p)$.
    \item The space complexity of these computations is bounded by $\text{NSPACE}[\log n]$.
\end{enumerate}
\end{theorem}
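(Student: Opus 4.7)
The plan is to exploit the equivalence between categorical algebra and categorical calculus established in Theorem~\ref{the:equivalent}, so that a single complexity analysis of either language transfers to both. I would work primarily with the algebraic side, because its operators admit a clean operator-by-operator accounting, and fall back on the calculus (with its at-most-$p$ free indices over objects) only when a uniform space argument is needed.

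For the time bound in part~(1), I would argue that the $\mathrm{Lim}$ operator is the cost-dominating construct, since every other operator in the algebra either shrinks its input or can be simulated through a limit. Given $p$ objects, each of size at most $n$, and $q$ morphisms, $\mathrm{Lim}(\mathcal{C})$ is computed by enumerating the $n^p$ tuples in $S_1 \times \cdots \times S_p$ and, for each tuple, checking the $q$ morphism constraints $f_k(x_{i_k})=x_{j_k}$ in constant time per morphism (each function in the thin set category is a finite lookup). This yields the claimed $O(q\cdot n^p)$. The remaining operators—map, $\pi$, $\sigma$, the set binaries, the tree operators $\mathit{getParent}$ and $\mathit{getAncestor}$, and the graph operators $\mathit{getReach}$ and $\mathit{getnHop}$—are each bounded by polynomials in $n$. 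The two graph operators are the most delicate; both can be realized by a BFS or transitive-closure scan of $E$ in time polynomial in $|S|+|T|+|E|=O(n)$, so they remain inside the $O(q\cdot n^p)$ envelope. A straightforward induction on the expression tree (which has constant size in data complexity) then closes the time bound.

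For the space bound in part~(2), the idea is that every variable in a calculus formula takes values in some object of cardinality at most $n$ and can therefore be stored as an index in $O(\log n)$ bits; since $p$ is a schema constant, the total workspace is $O(\log n)$. Existential quantifiers are handled by nondeterministic guesses; atomic tests—range terms $x\in S$, morphism equations $f(x)=y$, Dewey prefix comparisons for the tree predicates $\theta_T$, and the arithmetic predicates $\theta_M$—are all decidable in logspace. The reachability predicate $\rightsquigarrow^E$ is the classical $\mathrm{NL}$-complete problem and thus lies in $\mathrm{NSPACE}[\log n]$, and $\rightsquigarrow_n^E$ is a bounded version of the same.

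The main obstacle I expect is the universal quantifier (equivalently, the division operator of the algebra), whose naive semantics demands scanning an entire object rather than guessing a single witness. The plan here is to invoke the Immerman--Szelepcs\'enyi theorem $\mathrm{NL}=\mathrm{coNL}$: a $\forall$ subformula becomes the complement of an $\exists$ subformula, which can be complemented in logspace. Equivalently, one may observe that the reduction algorithm in the proof of Theorem~\ref{the:equivalent} realizes every $\forall$ via a division, and divisions are logspace-computable containment checks. Combining this with the per-operator analyses above yields the $\mathrm{NSPACE}[\log n]$ bound for the calculus, and then the equivalence transports the bound to the algebra.
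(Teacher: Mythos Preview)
Your proposal is correct and follows essentially the same operator-by-operator strategy as the paper: invoke the calculus/algebra equivalence, identify $\mathrm{Lim}$ as the time-dominant step (enumerate $n^p$ tuples, test $q$ morphism constraints each) to get $O(q\cdot n^p)$, and identify graph reachability as the space-governing step to land in $\mathrm{NSPACE}[\log n]$. Your explicit treatment of the universal quantifier via Immerman--Szelepcs\'enyi is in fact more careful than the paper's own sketch, which simply asserts that the non-$getReach$ operators lie in LOGSPACE without discussing $\forall$ or division.
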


\noindent \textbf{Proof}: Due to the equivalence between categorical calculus and categorical algebra, we analyze the time complexity of categorical algebra. The main operators are analyzed as follows:

(1) Map $f(S)$: the time complexity is $O(n)$, where $n$ denotes the number of elements in  $S$; 

(2) Project $\pi(R)$: the time complexity is $O(n log(n))$, where $n$ denotes the number of elements in $R$, as this operator requires the removal of the duplicate elements; 

(3) Select $\sigma_C(S)$:  the time complexity is $O(n)$, where $n$ denotes the number of elements in  $S$; 

(4) $getParent(D_1,D_2)$:  the time complexity is $O(|D_1| \cdot |D_2|)$;

(5) $getReach(S,T,E)$:  the time complexity is O($n^2$+$min(|S|, |T|) \cdot |E|$) time, where $n$ is the total number of nodes in the graph, as shown in Algorithm \ref{alg:OTC}.

(6) $Lim(\mathcal{C})$: assuming that the category $\mathcal{C}$ contains $p$ objects, $q$ morphisms and the maximum number of elements in one object is $n$. The total cost is $O(q \cdot n^p)$ in the worst case.

The most computationally expensive operator in the list above is \( Lim(\mathcal{C}) \). If an algebraic query involves \( k \) operators, the total computational cost is \( O(k \cdot q \cdot n^p) \). Given that \( k \) is typically small relative to other parameters in the dataset, the data complexity simplifies to \( O(q \cdot n^p) \).

Regarding space complexity, most operators, except for the $getReach$ operator, operate within LOGSPACE, similar to relational calculus. However, the $getReach$ query has a space complexity of \( O(\log^2 n) \), comparable to transitive closure \cite{10.1145/800141.804682}. Consequently, the overall space complexity of the operators falls under NSPACE[\(\log n\)].

\section{Conclusion and future work}



In this work, we propose two query mechanisms for categorical databases: categorical calculus and categorical algebra. Both mechanisms aim to extract subsets of elements within objects that meet specified query conditions. While category theory traditionally emphasizes abstract relationships and structures between objects, often overlooking their internal elements, our approach takes a different direction. We focus on how to extract subsets of elements within objects to tackle the challenges of multi-model data querying. This adaptation offers a novel perspective on applying category theory to the database field and addresses new research challenges that have not been thoroughly explored in existing literature.

In future work, we plan to extend categorical calculus and algebra to support additional query operations, such as shortest path queries and aggregation queries. Moreover, given the unifying nature and simplicity of the operators in categorical algebra, another avenue for future research is to develop holistic query optimization algorithms for multi-model data. 


\bibliographystyle{eptcs}
\bibliography{generic}

\begin{thebibliography}{10}
\providecommand{\bibitemdeclare}[2]{}
\providecommand{\surnamestart}{}
\providecommand{\surnameend}{}
\providecommand{\urlprefix}{Available at }
\providecommand{\url}[1]{\texttt{#1}}
\providecommand{\href}[2]{\texttt{#2}}
\providecommand{\urlalt}[2]{\href{#1}{#2}}
\providecommand{\doi}[1]{doi:\urlalt{https://doi.org/#1}{#1}}
\providecommand{\eprint}[1]{arXiv:\urlalt{https://arxiv.org/abs/#1}{#1}}
\providecommand{\bibinfo}[2]{#2}

\bibitemdeclare{article}{agrawal1990direct}
\bibitem{agrawal1990direct}
\bibinfo{author}{Rakesh \surnamestart Agrawal\surnameend}, \bibinfo{author}{Shaul \surnamestart Dar\surnameend} \& \bibinfo{author}{HV~\surnamestart Jagadish\surnameend} (\bibinfo{year}{1990}): \emph{\bibinfo{title}{Direct transitive closure algorithms: Design and performance evaluation}}.
\newblock {\slshape \bibinfo{journal}{ACM Transactions on Database Systems (TODS)}} \bibinfo{volume}{15}(\bibinfo{number}{3}), pp. \bibinfo{pages}{427--458}.

\bibitemdeclare{article}{Brown2019CategoricalDI}
\bibitem{Brown2019CategoricalDI}
\bibinfo{author}{Kristopher \surnamestart Brown\surnameend}, \bibinfo{author}{David~I. \surnamestart Spivak\surnameend} \& \bibinfo{author}{Ryan \surnamestart Wisnesky\surnameend} (\bibinfo{year}{2019}): \emph{\bibinfo{title}{Categorical Data Integration for Computational Science}}.
\newblock {\slshape \bibinfo{journal}{ArXiv}} \bibinfo{volume}{abs/1903.10579}.

\bibitemdeclare{article}{DBLP:persons/Codd72}
\bibitem{DBLP:persons/Codd72}
\bibinfo{author}{E.~F. \surnamestart Codd\surnameend} (\bibinfo{year}{1972}): \emph{\bibinfo{title}{Relational Completeness of Data Base Sublanguages}}.
\newblock {\slshape \bibinfo{journal}{Research Report / {RJ} / {IBM} / San Jose, California}} \bibinfo{volume}{{RJ987}}.

\bibitemdeclare{inproceedings}{10.1145/2463664.2463675}
\bibitem{10.1145/2463664.2463675}
\bibinfo{author}{Diego \surnamestart Figueira\surnameend} (\bibinfo{year}{2013}): \emph{\bibinfo{title}{On XPath with transitive axes and data tests}}.
\newblock In: {\slshape \bibinfo{booktitle}{Proceedings of the 32nd ACM SIGMOD-SIGACT-SIGAI Symposium on Principles of Database Systems}}, \bibinfo{series}{PODS '13}, \bibinfo{publisher}{Association for Computing Machinery}, \bibinfo{address}{New York, NY, USA}, p. \bibinfo{pages}{249–260}, \doi{10.1145/2463664.2463675}.
\newblock \urlprefix\url{https://doi.org/10.1145/2463664.2463675}.

\bibitemdeclare{inproceedings}{DBLP:conf/pods/FrancisGGLMMMPR23}
\bibitem{DBLP:conf/pods/FrancisGGLMMMPR23}
\bibinfo{author}{Nadime \surnamestart Francis\surnameend}, \bibinfo{author}{Am{\'{e}}lie \surnamestart Gheerbrant\surnameend}, \bibinfo{author}{Paolo \surnamestart Guagliardo\surnameend}, \bibinfo{author}{Leonid \surnamestart Libkin\surnameend}, \bibinfo{author}{Victor \surnamestart Marsault\surnameend}, \bibinfo{author}{Wim \surnamestart Martens\surnameend}, \bibinfo{author}{Filip \surnamestart Murlak\surnameend}, \bibinfo{author}{Liat \surnamestart Peterfreund\surnameend}, \bibinfo{author}{Alexandra \surnamestart Rogova\surnameend} \& \bibinfo{author}{Domagoj \surnamestart Vrgoc\surnameend} (\bibinfo{year}{2023}): \emph{\bibinfo{title}{{GPC:} {A} Pattern Calculus for Property Graphs}}.
\newblock In \bibinfo{editor}{Floris \surnamestart Geerts\surnameend}, \bibinfo{editor}{Hung~Q. \surnamestart Ngo\surnameend} \& \bibinfo{editor}{Stavros \surnamestart Sintos\surnameend}, editors: {\slshape \bibinfo{booktitle}{Proceedings of the 42nd {ACM} {SIGMOD-SIGACT-SIGAI} Symposium on Principles of Database Systems, {PODS} 2023, Seattle, WA, USA, June 18-23, 2023}}, \bibinfo{publisher}{{ACM}}, pp. \bibinfo{pages}{241--250}, \doi{10.1145/3584372.3588662}.
\newblock \urlprefix\url{https://doi.org/10.1145/3584372.3588662}.

\bibitemdeclare{article}{journals/pacmpl/GibbonsHHW18}
\bibitem{journals/pacmpl/GibbonsHHW18}
\bibinfo{author}{Jeremy \surnamestart Gibbons\surnameend}, \bibinfo{author}{Fritz \surnamestart Henglein\surnameend}, \bibinfo{author}{Ralf \surnamestart Hinze\surnameend} \& \bibinfo{author}{Nicolas \surnamestart Wu\surnameend} (\bibinfo{year}{2018}): \emph{\bibinfo{title}{Relational algebra by way of adjunctions}}.
\newblock {\slshape \bibinfo{journal}{Proc. {ACM} Program. Lang.}} \bibinfo{volume}{2}(\bibinfo{number}{{ICFP}}), pp. \bibinfo{pages}{86:1--86:28}.

\bibitemdeclare{inproceedings}{10.1145/1265530.1265535}
\bibitem{10.1145/1265530.1265535}
\bibinfo{author}{Todd~J. \surnamestart Green\surnameend}, \bibinfo{author}{Grigoris \surnamestart Karvounarakis\surnameend} \& \bibinfo{author}{Val \surnamestart Tannen\surnameend} (\bibinfo{year}{2007}): \emph{\bibinfo{title}{Provenance semirings}}.
\newblock In: {\slshape \bibinfo{booktitle}{Proceedings of the Twenty-Sixth ACM SIGMOD-SIGACT-SIGART Symposium on Principles of Database Systems}}, \bibinfo{series}{PODS '07}, \bibinfo{publisher}{Association for Computing Machinery}, \bibinfo{address}{New York, NY, USA}, p. \bibinfo{pages}{31–40}, \doi{10.1145/1265530.1265535}.
\newblock \urlprefix\url{https://doi.org/10.1145/1265530.1265535}.

\bibitemdeclare{inproceedings}{DBLP:conf/pods/Gucht87}
\bibitem{DBLP:conf/pods/Gucht87}
\bibinfo{author}{Dirk~Van \surnamestart Gucht\surnameend} (\bibinfo{year}{1987}): \emph{\bibinfo{title}{On the Expressive Power of the Extended Relational Algebra for the Unnormalized Relational Model}}.
\newblock In \bibinfo{editor}{Moshe~Y. \surnamestart Vardi\surnameend}, editor: {\slshape \bibinfo{booktitle}{Proceedings of the Sixth {ACM} {SIGACT-SIGMOD-SIGART} Symposium on Principles of Database Systems, March 23-25, 1987, San Diego, California, {USA}}}, \bibinfo{publisher}{{ACM}}, pp. \bibinfo{pages}{302--312}, \doi{10.1145/28659.28692}.
\newblock \urlprefix\url{https://doi.org/10.1145/28659.28692}.

\bibitemdeclare{article}{10.1093/comjnl/bxaa031}
\bibitem{10.1093/comjnl/bxaa031}
\bibinfo{author}{Jelle \surnamestart Hellings\surnameend}, \bibinfo{author}{Catherine~L \surnamestart Pilachowski\surnameend}, \bibinfo{author}{Dirk \surnamestart Van~Gucht\surnameend}, \bibinfo{author}{Marc \surnamestart Gyssens\surnameend} \& \bibinfo{author}{Yuqing \surnamestart Wu\surnameend} (\bibinfo{year}{2020}): \emph{\bibinfo{title}{{From Relation Algebra to Semi-join Algebra: An Approach to Graph Query Optimization}}}.
\newblock {\slshape \bibinfo{journal}{The Computer Journal}} \bibinfo{volume}{64}(\bibinfo{number}{5}), pp. \bibinfo{pages}{789--811}, \doi{10.1093/comjnl/bxaa031}.
\newblock \eprint{https://academic.oup.com/comjnl/article-pdf/64/5/789/38335010/bxaa031.pdf}.

\bibitemdeclare{inproceedings}{ioannidis1988efficient}
\bibitem{ioannidis1988efficient}
\bibinfo{author}{Yannis~E \surnamestart Ioannidis\surnameend}, \bibinfo{author}{Raghu \surnamestart Ramakrishnan\surnameend} et~al. (\bibinfo{year}{1988}): \emph{\bibinfo{title}{Efficient Transitive Closure Algorithms.}}
\newblock In: {\slshape \bibinfo{booktitle}{VLDB}}, \bibinfo{volume}{88}, pp. \bibinfo{pages}{382--394}.

\bibitemdeclare{article}{journals/pvldb/KiehnSGPWWPR22}
\bibitem{journals/pvldb/KiehnSGPWWPR22}
\bibinfo{author}{Felix \surnamestart Kiehn\surnameend}, \bibinfo{author}{Mareike \surnamestart Schmidt\surnameend}, \bibinfo{author}{Daniel \surnamestart Glake\surnameend}, \bibinfo{author}{Fabian \surnamestart Panse\surnameend}, \bibinfo{author}{Wolfram \surnamestart Wingerath\surnameend}, \bibinfo{author}{Benjamin \surnamestart Wollmer\surnameend}, \bibinfo{author}{Martin \surnamestart Poppinga\surnameend} \& \bibinfo{author}{Norbert \surnamestart Ritter\surnameend} (\bibinfo{year}{2022}): \emph{\bibinfo{title}{Polyglot Data Management: State of the Art {\&} Open Challenges}}.
\newblock {\slshape \bibinfo{journal}{Proc. {VLDB} Endow.}} \bibinfo{volume}{15}(\bibinfo{number}{12}), pp. \bibinfo{pages}{3750--3753}.

\bibitemdeclare{inproceedings}{10.1145/588011.588026}
\bibitem{10.1145/588011.588026}
\bibinfo{author}{Gabriel~M. \surnamestart Kuper\surnameend} \& \bibinfo{author}{Moshe~Y. \surnamestart Vardi\surnameend} (\bibinfo{year}{1984}): \emph{\bibinfo{title}{A new approach to database logic}}.
\newblock In: {\slshape \bibinfo{booktitle}{Proceedings of the 3rd ACM SIGACT-SIGMOD Symposium on Principles of Database Systems}}, \bibinfo{series}{PODS '84}, \bibinfo{publisher}{Association for Computing Machinery}, \bibinfo{address}{New York, NY, USA}, p. \bibinfo{pages}{86–96}, \doi{10.1145/588011.588026}.
\newblock \urlprefix\url{https://doi.org/10.1145/588011.588026}.

\bibitemdeclare{inproceedings}{DBLP:conf/icdt/Libkin95}
\bibitem{DBLP:conf/icdt/Libkin95}
\bibinfo{author}{Leonid \surnamestart Libkin\surnameend} (\bibinfo{year}{1995}): \emph{\bibinfo{title}{Approximation in Databases}}.
\newblock In: {\slshape \bibinfo{booktitle}{Database Theory - ICDT'95, 5th International Conference, Prague, Czech Republic, January 11-13, 1995, Proceedings}}, {\slshape \bibinfo{series}{Lecture Notes in Computer Science}} \bibinfo{volume}{893}, \bibinfo{publisher}{Springer}, pp. \bibinfo{pages}{411--424}.

\bibitemdeclare{article}{journals/jcss/LibkinW97}
\bibitem{journals/jcss/LibkinW97}
\bibinfo{author}{Leonid \surnamestart Libkin\surnameend} \& \bibinfo{author}{Limsoon \surnamestart Wong\surnameend} (\bibinfo{year}{1997}): \emph{\bibinfo{title}{Query Languages for Bags and Aggregate Functions}}.
\newblock {\slshape \bibinfo{journal}{J. Comput. Syst. Sci.}} \bibinfo{volume}{55}(\bibinfo{number}{2}), pp. \bibinfo{pages}{241--272}.

\bibitemdeclare{misc}{lu2025categoricalunificationmultimodeldata}
\bibitem{lu2025categoricalunificationmultimodeldata}
\bibinfo{author}{Jiaheng \surnamestart Lu\surnameend} (\bibinfo{year}{2025}): \emph{\bibinfo{title}{A Categorical Unification for Multi-Model Data: Part I Categorical Model and Normal Forms}}.
\newblock \eprint{2502.19131}.

\bibitemdeclare{article}{Lu:2019:MDN:3341324.3323214}
\bibitem{Lu:2019:MDN:3341324.3323214}
\bibinfo{author}{Jiaheng \surnamestart Lu\surnameend} \& \bibinfo{author}{Irena \surnamestart Holubov\'{a}\surnameend} (\bibinfo{year}{2019}): \emph{\bibinfo{title}{Multi-model Databases: A New Journey to Handle the Variety of Data}}.
\newblock {\slshape \bibinfo{journal}{ACM Comput. Surv.}} \bibinfo{volume}{52}(\bibinfo{number}{3}), pp. \bibinfo{pages}{55:1--55:38}.

\bibitemdeclare{inproceedings}{DBLP:conf/vldb/LuLCC05}
\bibitem{DBLP:conf/vldb/LuLCC05}
\bibinfo{author}{Jiaheng \surnamestart Lu\surnameend}, \bibinfo{author}{Tok~Wang \surnamestart Ling\surnameend}, \bibinfo{author}{Chee~Yong \surnamestart Chan\surnameend} \& \bibinfo{author}{Ting \surnamestart Chen\surnameend} (\bibinfo{year}{2005}): \emph{\bibinfo{title}{From Region Encoding To Extended Dewey: On Efficient Processing of {XML} Twig Pattern Matching}}.
\newblock In \bibinfo{editor}{Klemens \surnamestart B{\"{o}}hm\surnameend}, \bibinfo{editor}{Christian~S. \surnamestart Jensen\surnameend}, \bibinfo{editor}{Laura~M. \surnamestart Haas\surnameend}, \bibinfo{editor}{Martin~L. \surnamestart Kersten\surnameend}, \bibinfo{editor}{Per{-}{\AA}ke \surnamestart Larson\surnameend} \& \bibinfo{editor}{Beng~Chin \surnamestart Ooi\surnameend}, editors: {\slshape \bibinfo{booktitle}{Proceedings of the 31st International Conference on Very Large Data Bases, Trondheim, Norway, August 30 - September 2, 2005}}, \bibinfo{publisher}{{ACM}}, pp. \bibinfo{pages}{193--204}.
\newblock \urlprefix\url{http://www.vldb.org/archives/website/2005/program/paper/tue/p193-lu.pdf}.

\bibitemdeclare{article}{10.1145/99935.99943}
\bibitem{10.1145/99935.99943}
\bibinfo{author}{Ryohei \surnamestart Nakano\surnameend} (\bibinfo{year}{1990}): \emph{\bibinfo{title}{Translation with optimization from relational calculus to relational algebra having aggregate functions}}.
\newblock {\slshape \bibinfo{journal}{ACM Trans. Database Syst.}} \bibinfo{volume}{15}(\bibinfo{number}{4}), p. \bibinfo{pages}{518–557}, \doi{10.1145/99935.99943}.
\newblock \urlprefix\url{https://doi.org/10.1145/99935.99943}.

\bibitemdeclare{article}{10.1145/32204.32219}
\bibitem{10.1145/32204.32219}
\bibinfo{author}{G.~\surnamestart \"{O}zsoyo\u{g}lu\surnameend}, \bibinfo{author}{Z.~M. \surnamestart \"{O}zsoyo\u{g}lu\surnameend} \& \bibinfo{author}{V.~\surnamestart Matos\surnameend} (\bibinfo{year}{1987}): \emph{\bibinfo{title}{Extending relational algebra and relational calculus with set-valued attributes and aggregate functions}}.
\newblock {\slshape \bibinfo{journal}{ACM Trans. Database Syst.}} \bibinfo{volume}{12}(\bibinfo{number}{4}), p. \bibinfo{pages}{566–592}, \doi{10.1145/32204.32219}.
\newblock \urlprefix\url{https://doi.org/10.1145/32204.32219}.

\bibitemdeclare{inproceedings}{DBLP:conf/pods/ParedaensG88}
\bibitem{DBLP:conf/pods/ParedaensG88}
\bibinfo{author}{Jan \surnamestart Paredaens\surnameend} \& \bibinfo{author}{Dirk~Van \surnamestart Gucht\surnameend} (\bibinfo{year}{1988}): \emph{\bibinfo{title}{Possibilities and Limitations of Using Flat Operators in Nested Algebra Expressions}}.
\newblock In \bibinfo{editor}{Chris \surnamestart Edmondson{-}Yurkanan\surnameend} \& \bibinfo{editor}{Mihalis \surnamestart Yannakakis\surnameend}, editors: {\slshape \bibinfo{booktitle}{Proceedings of the Seventh {ACM} {SIGACT-SIGMOD-SIGART} Symposium on Principles of Database Systems, March 21-23, 1988, Austin, Texas, {USA}}}, \bibinfo{publisher}{{ACM}}, pp. \bibinfo{pages}{29--38}, \doi{10.1145/308386.308402}.
\newblock \urlprefix\url{https://doi.org/10.1145/308386.308402}.

\bibitemdeclare{book}{roman2017introduction}
\bibitem{roman2017introduction}
\bibinfo{author}{Steven \surnamestart Roman\surnameend} et~al. (\bibinfo{year}{2017}): \emph{\bibinfo{title}{An introduction to the language of category theory}}.
\newblock \bibinfo{volume}{6}, \bibinfo{publisher}{Springer}.

\bibitemdeclare{article}{schnorr1978algorithm}
\bibitem{schnorr1978algorithm}
\bibinfo{author}{Claus-Peter \surnamestart Schnorr\surnameend} (\bibinfo{year}{1978}): \emph{\bibinfo{title}{An algorithm for transitive closure with linear expected time}}.
\newblock {\slshape \bibinfo{journal}{SIAM Journal on Computing}} \bibinfo{volume}{7}(\bibinfo{number}{2}), pp. \bibinfo{pages}{127--133}.

\bibitemdeclare{misc}{schultz2016algebraic}
\bibitem{schultz2016algebraic}
\bibinfo{author}{Patrick \surnamestart Schultz\surnameend}, \bibinfo{author}{David~I. \surnamestart Spivak\surnameend}, \bibinfo{author}{Christina \surnamestart Vasilakopoulou\surnameend} \& \bibinfo{author}{Ryan \surnamestart Wisnesky\surnameend} (\bibinfo{year}{2016}): \emph{\bibinfo{title}{Algebraic Databases}}.
\newblock \eprint{1602.03501}.

\bibitemdeclare{inproceedings}{suciu:OASIcs.Tannen.10}
\bibitem{suciu:OASIcs.Tannen.10}
\bibinfo{author}{Dan \surnamestart Suciu\surnameend} (\bibinfo{year}{2024}): \emph{\bibinfo{title}{{Different Differences in Semirings}}}.
\newblock In \bibinfo{editor}{Antoine \surnamestart Amarilli\surnameend} \& \bibinfo{editor}{Alin \surnamestart Deutsch\surnameend}, editors: {\slshape \bibinfo{booktitle}{The Provenance of Elegance in Computation - Essays Dedicated to Val Tannen}}, {\slshape \bibinfo{series}{Open Access Series in Informatics (OASIcs)}} \bibinfo{volume}{119}, \bibinfo{publisher}{Schloss Dagstuhl -- Leibniz-Zentrum f{\"u}r Informatik}, \bibinfo{address}{Dagstuhl, Germany}, pp. \bibinfo{pages}{10:1--10:20}, \doi{10.4230/OASIcs.Tannen.10}.
\newblock \urlprefix\url{https://drops.dagstuhl.de/entities/document/10.4230/OASIcs.Tannen.10}.

\bibitemdeclare{inproceedings}{conf/pods/Tannen94}
\bibitem{conf/pods/Tannen94}
\bibinfo{author}{Val \surnamestart Tannen\surnameend} (\bibinfo{year}{1994}): \emph{\bibinfo{title}{Tutorial: Languages for Collection Types}}.
\newblock In \bibinfo{editor}{Victor \surnamestart Vianu\surnameend}, editor: {\slshape \bibinfo{booktitle}{Proceedings of the Thirteenth {ACM} {SIGACT-SIGMOD-SIGART} Symposium on Principles of Database Systems, May 24-26, 1994, Minneapolis, Minnesota, {USA}}}, \bibinfo{publisher}{{ACM} Press}, pp. \bibinfo{pages}{150--154}.

\bibitemdeclare{inproceedings}{conf/sigmod/TatarinovVBSSZ02}
\bibitem{conf/sigmod/TatarinovVBSSZ02}
\bibinfo{author}{Igor \surnamestart Tatarinov\surnameend}, \bibinfo{author}{Stratis \surnamestart Viglas\surnameend}, \bibinfo{author}{Kevin~S. \surnamestart Beyer\surnameend}, \bibinfo{author}{Jayavel \surnamestart Shanmugasundaram\surnameend}, \bibinfo{author}{Eugene~J. \surnamestart Shekita\surnameend} \& \bibinfo{author}{Chun \surnamestart Zhang\surnameend} (\bibinfo{year}{2002}): \emph{\bibinfo{title}{Storing and querying ordered {XML} using a relational database system}}.
\newblock In \bibinfo{editor}{Michael~J. \surnamestart Franklin\surnameend}, \bibinfo{editor}{Bongki \surnamestart Moon\surnameend} \& \bibinfo{editor}{Anastassia \surnamestart Ailamaki\surnameend}, editors: {\slshape \bibinfo{booktitle}{Proceedings of the 2002 {ACM} {SIGMOD} International Conference on Management of Data, Madison, Wisconsin, USA, June 3-6, 2002}}, \bibinfo{publisher}{{ACM}}, pp. \bibinfo{pages}{204--215}.

\bibitemdeclare{inproceedings}{10.1145/800141.804682}
\bibitem{10.1145/800141.804682}
\bibinfo{author}{Martin \surnamestart Tompa\surnameend} (\bibinfo{year}{1980}): \emph{\bibinfo{title}{Two familiar transitive closure algorithms which admit no polynomial time, sublinear space implementations}}.
\newblock In: {\slshape \bibinfo{booktitle}{Proceedings of the Twelfth Annual ACM Symposium on Theory of Computing}}, \bibinfo{series}{STOC '80}, \bibinfo{publisher}{Association for Computing Machinery}, \bibinfo{address}{New York, NY, USA}, p. \bibinfo{pages}{333–338}, \doi{10.1145/800141.804682}.
\newblock \urlprefix\url{https://doi.org/10.1145/800141.804682}.

\bibitemdeclare{article}{DBLP:journals/corr/abs-2201-04905}
\bibitem{DBLP:journals/corr/abs-2201-04905}
\bibinfo{author}{Valter \surnamestart Uotila\surnameend} \& \bibinfo{author}{Jiaheng \surnamestart Lu\surnameend} (\bibinfo{year}{2022}): \emph{\bibinfo{title}{A Formal Category Theoretical Framework for Multi-model Data Transformations}}.
\newblock {\slshape \bibinfo{journal}{CoRR}} \bibinfo{volume}{abs/2201.04905}.
\newblock \eprint{2201.04905}.

\bibitemdeclare{article}{10.1145/114325.103712}
\bibitem{10.1145/114325.103712}
\bibinfo{author}{Allen \surnamestart Van~Gelder\surnameend} \& \bibinfo{author}{Rodney~W. \surnamestart Topor\surnameend} (\bibinfo{year}{1991}): \emph{\bibinfo{title}{Safety and translation of relational calculus}}.
\newblock {\slshape \bibinfo{journal}{ACM Trans. Database Syst.}} \bibinfo{volume}{16}(\bibinfo{number}{2}), p. \bibinfo{pages}{235–278}, \doi{10.1145/114325.103712}.
\newblock \urlprefix\url{https://doi.org/10.1145/114325.103712}.

\bibitemdeclare{misc}{williams2023newboundsmatrixmultiplication}
\bibitem{williams2023newboundsmatrixmultiplication}
\bibinfo{author}{Virginia~Vassilevska \surnamestart Williams\surnameend}, \bibinfo{author}{Yinzhan \surnamestart Xu\surnameend}, \bibinfo{author}{Zixuan \surnamestart Xu\surnameend} \& \bibinfo{author}{Renfei \surnamestart Zhou\surnameend} (\bibinfo{year}{2023}): \emph{\bibinfo{title}{New Bounds for Matrix Multiplication: from Alpha to Omega}}.
\newblock \eprint{2307.07970}.

\end{thebibliography}

\appendix

\newpage





\section{Appendix: More operators for tree data in categorical algebra}
\label{sec:otheralgebra}

Given a Dewey code $d$, we use ``$|d|$” to denote the length of $d$, ``$d[i]$” to denote the $i$th integer of $d$ (starting from 1), and ``$d[i,j]$” to denote the substring from its $i$'th integer to its $j$'th integer. Given two Dewey code sets $D_1$ and $D_2$, we define the following algebraic operators:

\[getParent(D_1,D_2) \myeq \{ (x,y) | x \in D_1 \wedge  y\in D_2  \wedge x ~ is ~ a ~ \text{prefix} ~ of ~ y  \wedge |x|=|y|-1\}\]

\[getAncestor(D_1,D_2) \myeq \{ (x,y) | x \in D_1 \wedge  y\in D_2  \wedge x ~   \text{is a prefix of}  ~ y  \wedge x < y\}\]

\[getSibling(D_1,D_2) \myeq \{ (x,y) | x \in D_1 \wedge  y\in D_2  \wedge x[1,|x|-1] = y[1,|y|-1]  \wedge |x|=|y|\}\]

\[getFollowing(D_1,D_2) \myeq \{ (x,y) | x \in D_1 \wedge  y\in D_2  \wedge x > y\}\]

\[getPreceding(D_1,D_2) \myeq \{ (x,y) | x \in D_1 \wedge  y\in D_2  \wedge x < y \}\]

\[getPrecedingSilbing(D_1,D_2) \myeq \{ (x,y) | x \in D_1 \wedge  y\in D_2  \wedge x[1,n-1] = y[1,n-1]   \wedge x < y \}\]

\[getFollowingSilbing(D_1,D_2) \myeq \{ (x,y) | x \in D_1 \wedge  y\in D_2  \wedge x[1,n-1] = y[1,n-1]  \wedge x > y\}\]

\section{Algorithms for getReach }
\label{sec:getreachalgorithm}


We propose an algorithm to compute $getReach(S,T,E)$ operator in Algorithm \ref{alg:OTC}, where the final results are stored in a reachability matrix $R$, where \( R[i][j] = 1 \) if there is a path from vertex \( i \) to vertex \( j \), and \( R[i][j] = 0 \) otherwise. In Line 1, start with the adjacency matrix \( R \) as the initial adjacent matrix computed by $E$. For each node $n$ from the smaller cardinality set between $S$ and $T$, perform a DFS or  BFS search for $n$ to find all its reachable nodes to update the entries in $R$ (Lines 2 and 3). Finally, the results are returned based on $R$ with respect to the candidate pairs from $S$ and $T$. The time complexity of this algorithm is $O\bigl(n^2$+$min(|S|, |T|) \cdot |E|\bigr)$ time, where $n$ is the total number of nodes in the graph.


The $getReach(S, T, E)$ problem, which returns the pairs of nodes reachable from sets \( S \) and \( T \) via edges in \( E \), is a special case of \textit{transitive closure}, which has been extensively studied in the fields of algorithms (e.g.\cite{schnorr1978algorithm,10.1145/800141.804682}) and databases (e.g. \cite{ioannidis1988efficient,agrawal1990direct}). Transitive closure has a space complexity of \( NSPACE[\log n] \), indicating that its space requirements are manageable, it can be fully parallelized, and it supports efficient incremental evaluation. Although transitive closure is computationally challenging, comparable to matrix multiplication—with the best known bound being approximately \( O(n^{2.371552}) \) as of April 2024 \cite{williams2023newboundsmatrixmultiplication}—using matrix multiplication algorithms in practice is generally not worthwhile.





\begin{algorithm}
\caption{Computing $getReach$ operator}
\label{alg:OTC}
\KwIn{two node sets $S$ and $T$, and one edge set $E$} 
\KwOut{$getReach(S,T,E)$} 
\DontPrintSemicolon

Initialize a reachability matrix $R$ with $n \times n$;  \tcp*{$n$ is the number of nodes in $E$}

Let $M$ denote the set with the fewer number of nodes between $S$ and $T$;

Perform a DFS or BFS search for each node in $M$ to update the corresponding entries in the matrix $R$; 

\textbf{Return}  the reachability results based on $R$ with respective to the pairs from $S$ and $T$.

\end{algorithm}

\section{An example for multi-model data}


In this section, we demonstrate the proposed algebra and calculus using a multi-model data example. As shown in Figure \ref{fig:firstexample}, three types of data—relation, XML, and graph—are unified under a single category schema. Let's consider a query:\textit{ find the names of all customers who, through a chain of acquaintances, know "John," have purchased a toy product, and have a credit limit greater than 2000.} This query requires a comprehensive join across all three types of data.

Categorical calculus: 

\{  $x$ $| x \in CName, \exists  a \in$ Source,   $\exists b \in$ Target, $\exists l \in$ OrderLine, ($a \rightsquigarrow^{Knows} b$) $\wedge$ 
 $b$$\cdot$Customer$\cdot$CName = ``\texttt{John}''  $\wedge$ $a$$\cdot$Customer$\cdot$ Credit\_limit $>$ 2000  $\wedge$ $l$$\cdot$Order$\cdot$Customer=$a$$\cdot$Customer $ \wedge$ $l$$\cdot$Product$\cdot$PName=``\texttt{Toy}'' $ \wedge$  $x$ = $a$$\cdot$Customer $\cdot$CName\}

One approach to categorical algebra for this query is:

 $S_1 = getReach(Source,Target,Knows)$ 

 $S_2 = \sigma_{Target 
 \cdot Customer  \cdot CName='John' \wedge Source \cdot Customer \cdot Credit\_limit>2000}(S_1)$

 $S_3=\sigma_{Product \cdot PName='Toy'}(OrderLine)$

$S_4 = \pi_{Source} S_2$

 $S_5 = \sigma_{S_3 \cdot Order \cdot Customer=S_4 \cdot Customer} Lim( Cat( S_3, S_4))$
 
$S_6 = \pi_{Source}S_5 \cdot Customer \cdot CName$

 Return $S_6$

 Another way to write the algebraic operators is to swap the order between $limit$ and $getReach$ operators  (Rule 9) and push down the select operator for $getReach$ (Rule 4) for query optimization.

$S_1 = \sigma_{Customer \cdot Credit\_limit>2000}(Source)$

$S_2=\sigma_{Product \cdot PName='Toy'}(OrderLine)$

$S_3=\pi_{S_1} (\sigma_{S_1 \cdot Cusotmer = S_2 \cdot Order \cdot Customer} Lim(Cat(S_1,S_2)))$

 $S_4 = \sigma_{Customer \cdot CName='John'}(Target)$

$S_5 = getReach(S_3, S_4, Knows)$

$S_6 = (\pi_{Source} S_5) \cdot Customer \cdot CName $

Return $S_6$

\begin{figure}\centering\includegraphics[width=0.75\textwidth]{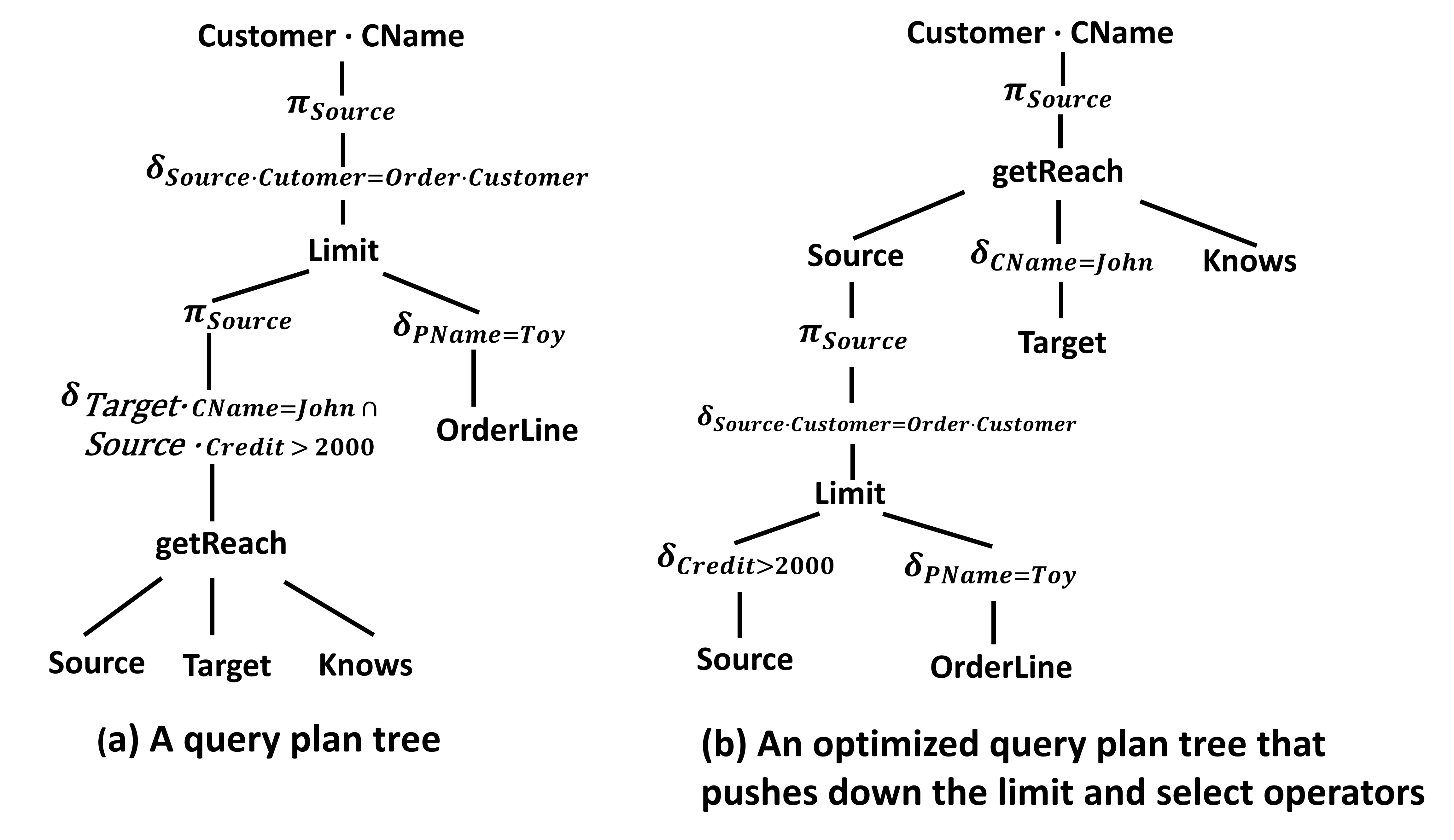}\caption{Two holistic query plans involving three types of data}\label{fig:queryplan}\end{figure}

 These two query plans are depicted in Figure \ref{fig:queryplan}. This example demonstrates that a single categorical calculus expression can be implemented using different sequences of categorical algebra operators. This framework paves the way for designing a holistic query plan for multi-model query optimization.  
 





\section{More algebraic transformation rules}
\label{sec:transformationmore}

In this section, we will first prove the transformation rules (7) and (8). We will then present additional transformation rules involving \(\sigma\), \(\pi\), \(\cup\), \(\cap\), and \(-\) within the framework of categorical algebra. 

\smallskip

\textbf{Proof of Rule (7)}: We first prove that (i) 
\begin{equation} \label{equ:7left}
    \pi_L(Lim(Cat(R_1, R_2, f_1:R_1 \to R_2))) \subseteq Lim(Cat(\pi_{L_1}(R_1), \pi_{L_2}(R_2), f_2: \pi_{L_1}(R_1) \to \pi_{L_2}(R_2)) ) 
\end{equation} 

Given any element $r$ in $\pi_L(Lim(Cat(R_1, R_2, f_1:R_1 \to R_2)))$, $\exists r_1 \in R_1, r_2 \in R_2$ and $f_1(r_1) = r_2$, $r = \pi_L(r_1,r_2)$=$(\pi_{L_1}r_1, \pi_{L_2}r_2) = (\pi_{L_1}r_1, \pi_{L_2}(f_1(r_1))) = (\pi_{L_1}r_1, f_2(\pi_{L_1}(r_1)))$. Therefore, 

$r \in Lim(Cat(\pi_{L_1}(R_1), \pi_{L_2}(R_2), f_2: \pi_{L_1}(R_1) \to \pi_{L_2}(R_2))$.

We then prove that (ii)
\begin{equation} \label{equ:7right}
   Lim(Cat(\pi_{L_1}(R_1), \pi_{L_2}(R_2), f_2: \pi_{L_1}(R_1) \to \pi_{L_2}(R_2)) )   \subseteq \pi_L(Lim(Cat(R_1, R_2, f_1:R_1 \to R_2)))  
\end{equation} 

Given any element ($r_1,r_2$) in $Lim(Cat(\pi_{L_1}(R_1), \pi_{L_2}(R_2), f_2: \pi_{L_1}(R_1) \to \pi_{L_2}(R_2)))$, $f_2(r_1)=r_2$. Thus, $\exists$$r_1'$$\in R_1$, s.t. $(r_1,r_2) = (r_1,f_2(r_1)) = (\pi_{L_1}(r_1'),f_2(\pi_{L_1}(r_1'))) = (\pi_{L_1}(r_1'),\pi_{L_2}(f_1(r_1'))) = (\pi_{L}(r_1'),\pi_{L}(f_1(r_1')))$. Therefore, ($r_1,r_2$) is also in $\pi_L(Lim(Cat(R_1, R_2, f_1:R_1 \to R_2)))$.

Combining Formulas (\ref{equ:7left}) and (\ref{equ:7right}) yields the result, thereby proving the rule.

\smallskip

\textbf{Proof of Rule (8)}:  Similar to the proof for Rule (7), we first prove that (i) 
\begin{equation} \label{equ:left}
    g(Lim(Cat(S_1, S_2, f_1:S_1 \to S_2))) \subseteq Lim(Cat(g_1(S_1), g_2(S_2), f_2: g_1(S_1) \to g_2(S_2)) ) 
\end{equation} 

Consider any element $(x_1,x_2)$ in the set of $Lim(Cat(S_1, S_2, f_1:S_1 \to S_2))$, we have $g(x_1,x_2)=(g_1(x_1),  g_2(x_2))$, as $g= g_1 \otimes g_2$. Further,  $(g_1(x_1),  g_2(x_2)) = (g_1(x_1),  g_2(f_1(x_1))) = (g_1(x_1),  f_2(g_1(x_1))) $, which belongs to $Lim(Cat(g_1(S_1), g_2(S_2), f_2: g_1(S_1) \to g_2(S_2)) )$ as desired.

We then prove that (ii) \begin{equation} \label{equ:right} Lim(Cat(g_1(S_1), g_2(S_2), f_2: g_1(S_1) \to g_2(S_2)) \subseteq
    g(Lim(Cat(S_1, S_2, f_1:S_1 \to S_2)))   ) 
\end{equation}

Given any element $(y_1, y_2) \in Lim(Cat(g_1(S_1), g_2(S_2), f_2: g_1(S_1) \to g_2(S_2))) $ $\Rightarrow$ $\exists x_1 \in S_1$ and $\exists x_2 \in S_2$, s.t. $y_1 = g_1(x_1)$ and $y_2 = g_2(x_2)$ $\Rightarrow$ $(g_1(x_1), g_2(x_2)) =  (g_1(x_1), f_2(g_1(x_1))) = (g_1(x_1),  g_2(f_1(x_1))) = g(x_1,f_1(x_1))$ $\Rightarrow$ $(y_1, y_2) \in g(Lim(Cat(S_1, S_2, f_1:S_1 \to S_2))) $.

Combining Formulas (\ref{equ:left}) and (\ref{equ:right}) yields the result.

\smallskip

We now present more categorical transformation rules, which have the similar expression in relational algebra.

  1. \textbf{Cascade of $\sigma$}. A conjunctive selection condition can be broken up into a cascade of individual $\sigma$ operations. \[ \sigma_{C_1 \wedge \ldots \wedge C_n}(R)  \equiv \sigma_{C_1}   ( \ldots (\sigma_{C_n}(R))    \]

   2. \textbf{Commutative of $\sigma$}. The $\sigma$ operation is commutative. \[ \sigma_{C_1} (\sigma_{C_2}(R))  \equiv \sigma_{C_2} (\sigma_{C_1}(R))    \]

  3. \textbf{Cascade of $\pi$}. In a cascade of $\pi$ operations, all but the last one can be ignored. \[ {\pi_{List_1}}   ( \ldots ({\pi_{List_n}}(R)))   \equiv {\pi_{List_1}}(R)    \]

4. \textbf{Pushing $\sigma$ in conjunction with set operators}. \[ \sigma_C (R \cup S)   \equiv \sigma_C(R) \cup \sigma_C(S)    \] \[ \sigma_C (R \cap S)   \equiv \sigma_C(R) \cap \sigma_C(S)    \] 

\end{document}